\newcommand{\hide}[1]{}
\newcommand{\smo}{SyMO\xspace}
\newcommand{\symo}{\smo}
\newcommand{\smot}{SMTO\xspace}
\newcommand{\smto}{\smot}
\newcommand{\toolname}{Delphi\xspace}
\newcommand{\orint}{{\mathcal{J}}}
\newcommand{\freeorint}{{\mathcal{I}}}
\newcommand{\satformula}{\rho}
\newcommand{\quantifiedformula}{\rho}
\newcommand{\syntformula}{\phi}
\newcommand{\equality}{\approx}
\newcommand{\ointerface}[3]{(#1,#2,#3,\emptyset)}
\newcommand{\freeointerface}[4]{(#1,#2,#3,#4)}
\newcommand{\CDOT}{{{\cdot}}}
\newcommand{\mat}[1]{{#1}}
\newtheorem{definition}{Definition}[section]
\newtheorem{theorem}{Theorem}[section]
\lstdefinelanguage{alg}{
  sensitive = true,
  keywords={algorithm, input, output, return, if, else, for, while},
  numberstyle=\footnotesize,
  numbersep=9pt,
  showstringspaces=false,
  breaklines=true,
  comment=[l]{//},
  morecomment=[s]{/*}{*/}
}
\lstdefinestyle{grammar}{
  belowcaptionskip=1\baselineskip,
  basicstyle=\rmfamily\mdseries\footnotesize,
  breaklines=true,
  language=alg,
  xleftmargin=\parindent,
  showstringspaces=false,
  mathescape=true,
  numberstyle=\tiny,
  keywordstyle=\bfseries
}
\definecolor{codegreen}{RGB}{51,135,97}
\definecolor{codepurple}{RGB}{93, 49, 122}
\definecolor{codered}{RGB}{128, 18, 45}
\definecolor{backcolour}{rgb}{1.0,1.0,1.0}
\lstdefinelanguage{smt}{
  sensitive = false,
  keywords=[1]{declare, fun, synth, rec, check, assert, sat, blocking, define, constraint},
  keywordstyle=[1]\color{codepurple},
  keywords=[2]{div, mod, ite, and},
  keywordstyle=[2]\color{codegreen},
  keywords=[3]{Int, Bool},
  keywordstyle=[3]\color{codered},
  numbers=none,
  stepnumber=1,
  numbersep=8pt,
  showstringspaces=false,
  breaklines=true,
  comment=[l]{;},
  basicstyle=\footnotesize
}
\newcounter{myctr}
\newenvironment{myitemize}{\begin{list}{$\bullet$}
{\setlength{\topsep}{1mm}\setlength{\itemsep}{0.25mm}
\setlength{\parsep}{0.1mm}
\setlength{\itemindent}{0mm}\setlength{\partopsep}{0mm}
\setlength{\labelwidth}{15mm}
\setlength{\leftmargin}{4mm}}}{\end{list}}
\newcommand{\evalsatformula}{\mu}
\newcommand{\nf}[1]{{#1}{\downarrow}}
\newcommand{\teq}{\approx}
\newcommand{\modelrestrict}[2]{ #1\mid_{#2}}
\title{Satisfiability and Synthesis Modulo Oracles}
\author{\IEEEauthorblockN{Elizabeth Polgreen}
\IEEEauthorblockA{University of Edinburgh and UC Berkeley}
\and
\IEEEauthorblockN{Andrew Reynolds}
\IEEEauthorblockA{University of Iowa}
\and
\IEEEauthorblockN{Sanjit A.~Seshia}
\IEEEauthorblockA{UC Berkeley}}
\begin{document}

\maketitle
\begin{abstract}
In classic program synthesis algorithms, such as counterexample-guided inductive synthesis (CEGIS), the algorithms alternate between a synthesis phase and an oracle (verification) phase. Many synthesis algorithms use a white-box oracle based on satisfiability modulo theory (SMT) solvers to provide counterexamples. But what if a white-box oracle is either
not available or not easy to work with? 
We present a framework for solving a general class of oracle-guided synthesis problems which we term {\em synthesis modulo oracles}. In this setting, oracles may be black boxes with a query-response interface defined by the synthesis problem. As a necessary component of this framework, we also formalize the problem of {\em satisfiability modulo theories and oracles}, and present an algorithm for solving this problem.
We implement a prototype solver for satisfiability and synthesis modulo oracles and demonstrate that, by using oracles that execute functions not easily modeled in SMT-constraints, such as recursive functions or oracles that incorporate compilation and execution of code, \smto and \smo are able to solve problems beyond the abilities of standard SMT and synthesis solvers.
\end{abstract}
\section{Introduction}
\label{sec:intro}
A common formulation of program synthesis is to find a program, from
a specified class of programs,
that meets some correctness specification~\cite{sygus}. 
Classically, this is encoded as the 2nd-order logic formula 
$\exists \vec{f} \forall \vec{x} \,\, \syntformula$, 
where $\vec{f}$ is a set of target functions to be synthesized, 
$\vec{x}$ is a set of 0-ary symbols, and $\syntformula$ is a quantifier-free formula in a logical
theory (or combination of theories) $T$. 
A tuple of functions $\vec{f^*}$
satisfies the semantic restrictions if the formula 
$\forall \vec{x} \,\, \syntformula$ is valid in $T$ when the tuple is substituted for $\vec{f}$ in
$\syntformula$. Many problems are specified in this form, and the SyGuS-IF format~\cite{sygus-if} is 
one way of specifying such syntax-guided synthesis (SyGuS) problems.

Whilst powerful, this format is restrictive in one key way: it requires the correctness condition 
to be specified as a satisfiability modulo theories (SMT)~\cite{barrett-smtbookch09} formula in a fully white-box manner.
Put another way, SyGuS problems must be specified with static constraints before the solving process begins. This limits the problems that can be specified, as well as the oracles that can be used to guide the
search. For example, if one wants to synthesize (parts of) a protocol whose correctness needs to be
checked by a temporal logic model checker (e.g.~\cite{udupa-pldi13}), 
such a model-checking oracle cannot be directly invoked
within a general-purpose SyGuS solver and instead requires creating a custom solver.

Similarly, SMT solvers, used widely in verification and synthesis, require their input to be encoded as a logical formula prior to the initiation of solving. Whilst the language of SMT-LIB is powerful and expressive, many formulas are challenging for current SMT solvers to reason about; e.g., as in Figure~\ref{fig:prime}, finding a prime factorization of a given number.
Here it would be desirable to abstract this reasoning to an external oracle, rather than rely on the SMT solver's ability to reason about recursive functions. 
\begin{figure}
\begin{lstlisting}[language=smt]
(define-fun-rec isPrimeRec ((a Int) (b Int)) Bool
  (ite (> b (div a 2)) true
    (ite (= (mod a b) 0) 
      false
      (isPrimeRec a (+ b 1)))))

(define-fun isPrime ((a Int)) Bool
  (ite (<= a 1)
    false
    (isPrimeRec a 2)))

(assert (and (isPrime f1)(isPrime f2)(isPrime f3)))
(assert (= (* f1 f2 f3) 76))
\end{lstlisting}
\caption{SMT problem fragment: find prime factors of $76$. Unsolved by CVC4 v1.9. Solved by \smto using isPrime oracle in $<1$s \label{fig:prime}}
\end{figure}

This motivates our introduction of oracles to synthesis and SMT solving. 
Oracles can be black-box implementations that can be queried based on a pre-defined interface of query and response types. 
Examples of oracles could be components of systems that are too large and complex to analyze (but which can be executed on inputs) or external verification engines solving verification queries beyond SMT solving. 

Prior work has set out a theoretical framework expressing synthesis algorithms as oracle-guided inductive synthesis~\cite{ogis-theory}, where a learner interacts with an oracle via a pre-defined oracle interface. %
However, this work does not give a general algorithmic approach to solve
oracle-guided synthesis problems or demonstrate the framework on practical applications.
An important contribution we make in this work is to give {\em a unified algorithmic approach to solving
oracle-guided synthesis} problems, termed \smo. 
The \smo approach is based on a key insight: that query and response types
can be associated with two types of logical formulas: {\em verification assumptions} and
{\em synthesis constraints}. The former provides a way to encode restrictions on black-box oracle
behavior into an SMT formula, whereas the latter provide a way for oracles to guide the search of
the synthesizer.

\begin{figure}
\centering
\begin{subfigure}{.21\textwidth}
  \centering
  \includegraphics[width=\textwidth]{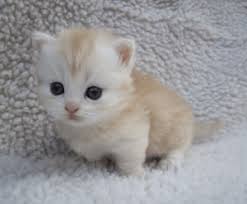}
  \caption{Original image}
\end{subfigure}%
\begin{subfigure}{.21\textwidth}
  \centering
  \includegraphics[width=\textwidth]{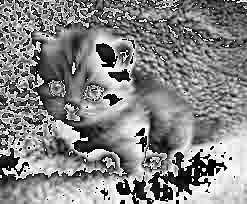}
  \caption{Target image}
\end{subfigure}
\caption{Image manipulation: transformations synthesized by \smo in $<1$ sec.}
\label{fig:invertcat}
\end{figure}

In order to explain the use-case for assumptions, let us first introduce \emph{oracle function symbols} and \emph{Satisfiability Modulo Theories and Oracles (\smot)}. Oracle function symbols are $n$-ary symbols whose behavior is associated with some oracle. Consider a quantifier-free formula $\satformula$ which contains an oracle function symbol $\theta$. \smot looks for a satisfying assignment to the formula based on initially assuming $\theta$ is a universally quantified uninterpreted function (i.e., we look for a satisfying assignment that would work for any possible implementation of the oracle): $\forall \theta \satformula$. As we make calls to the oracle, we begin to learn more about its behavior, and we encode this behavior as assumptions $\alpha$, such that the formula becomes $\forall \theta \alpha \Rightarrow \satformula$. This is the primary use case for assumptions generated by oracles, they are used to constrain the behavior of oracle function symbols. 
In \smo, determining the correctness of a candidate function is an \smot problem, and
assumptions generated by oracles are used in the \smot solving process.

As an exemplar of an existing oracle-guided synthesis algorithm that goes beyond the SMT-solver-based counterexample oracles, consider ICE-learning~\cite{ice} for invariant synthesis. ICE-learning uses three oracles: an oracle to provide positive examples (examples which \emph{should} be contained within the invariant); an oracle to provide negative examples (examples which \emph{should not} be contained within the invariant); and an oracle to provide implication examples (an example pair where if the first  element is contained within the invariant, both must be contained). Whilst it is possible to build some of these oracles using an SMT solver, it is often more effective to construct these oracles in other ways, for instance, the positive example oracle can simply execute the loop or system for which an invariant is being discovered and return the output.

We implement \smo in a prototype solver \toolname, and hint at its broad utility by demonstrating several applications including programming by example, synthesis of controllers for LTI systems, synthesizing image transformations (e.g., Figure~\ref{fig:invertcat}, and satisfiability problems that reason about primes (e.g., Figure~\ref{fig:prime}). The latter use cases illustrate the power of being able to incorporate oracles into \smo that are too complex to be modeled or for SMT solvers to reason about. 

To summarize, the main contributions of this paper are:
\begin{myitemize}
\item A formalization of the problem of satisfiability and synthesis modulo oracles (Sec.~\ref{sec:oracles});
\item A unifying algorithmic approach for solving these problems (Sec.~\ref{sec:firstorder} and Sec.~\ref{sec:synthesis});
\item Demonstration of how this approach can capture popular synthesis strategies from the literature (Sec.~\ref{sec:existing}), and
\item A prototype solver \toolname, and an experimental demonstration of the broad applicability of this framework (Sec.~\ref{sec:evaluation}).
\end{myitemize}

\subsubsection*{Related work} 
Almost all synthesis algorithms can be framed as some form of oracle-guided synthesis. Counterexample-guided inductive synthesis (CEGIS) is the original synthesis strategy used for Syntax-Guided Synthesis~\cite{sketching}, and uses a correctness oracle that returns counterexamples.
Further developments in synthesis typically fall into one of two categories.
The first comprises innovative search algorithms to search the space more efficiently; 
for instance, genetic algorithms~\cite{david-proganalysis}, reinforcement learning~\cite{naik-iclr19}, or partitioning the search space in creative ways~\cite{eusolver}.
The second category comprises extensions to the communication paradigm permitted between the synthesis and the verification phase. For instance, CEGIS modulo theories~\cite{cegist}, CEGIS(T), extends the oracle interface over standard CEGIS to permit responses in the form of a restricted set of constraints over constants in the candidate program, while  
CVC4's single-invocation algorithm~\cite{cvc4-si} makes full use of the white-box nature of the SMT solver oracle. Other work leverages the ability to classify counterexamples as positive or negative examples~\cite{pldi-representationinvs}. 
There are also notable algorithms in invariant synthesis based on innovative use of different query types~\cite{pldi-representationinvs,ice}. Our work has one key stand-out difference over these: in all of these algorithms, the correctness criteria must be specified as a logical formula, whereas in our framework we enable specification of the correctness criteria as a combination of a logical formula and calls to external oracles which may be opaque to the solver. 
Synthesis with distinguishing inputs~\cite{ogis} is an exception to this pattern and uses a specific set of three interacting black-box oracles, to solve the very specific problem of synthesis of loop-free programs from components. 
Our work differs from this and the previously-mentioned algorithms in that they are customized to use certain specific types of oracle queries, whereas, we give a ``meta-solver'' allowing any type of oracle query that can be formulated as either generating a constraint or an assumption in the form of a logical formula. 

The idea of satisfiability with black-boxes has been tackled before on work on abstracting functional components as uninterpreted/partially-interpreted functions (see, e.g.,~\cite{andraus-dac04,brady-memocode10,conditional-abstractions}), which use counterexample-guided abstraction refinement~\cite{cegar}. Here, components of a system are abstracted and then refined based on whether the abstraction is sufficiently detailed to prove a property. However, to do this, the full system must be provided as a white-box. The key contribution our work makes in this area is a framework allowing the use of black-box components that obey certain query-response interface constraints, where the refinement is dictated by these constraints and the black-box oracle interaction.

\section{Oracles}
\label{sec:oracles}

In this section, we introduce basic definitions and terminology for the rest of
the paper. We begin with some preliminaries about SMT and synthesis.

\subsection{Preliminaries and Notation}
\label{sec:background}

We use the following basic notations throughout the paper.
If $e$ is an expression and $x$ is free in $e$, let $e\CDOT\{x \rightarrow t \}$ be the formula obtained from the formula $e$ by proper substitution of the variable $x$ by the variable $t$.

\subsubsection{Satisfiability Modulo Theories (SMT)}

The input to an SMT problem is a first-order logical formula $\quantifiedformula$.
We use $\equality$ to denote the (infix) equality predicate.
The task is to determine whether $\quantifiedformula$ is $T$-satisfiable or $T$-unsatisfiable, that is, satisfied by a model which restricts the interpretation of symbols in $\quantifiedformula$ based on a background theory $T$.
If $\quantifiedformula$ is satisfiable, a solver will usually return a model of $T$ that makes $\quantifiedformula$ true, which will include assignments to all free variables in $\quantifiedformula$.
We additionally say that a formula is $T$-valid if it is satisfied by \emph{all} models of $T$.

\subsubsection{Syntax-Guided Synthesis}
 
In syntax-guided synthesis, we are given a set of functions $\vec{f}$ to be synthesized,
associated languages of expressions $\vec{L} = L_1,\ldots,L_m$ (typically generated
by grammars),
and we seek to solve a formula of the form
$$
\exists \vec{f} \in \vec{L} \forall \vec{x}\,\, \syntformula$$ 
where $\vec{x} \equality x_1 \ldots x_n$ is a set of 0-ary symbols and $\syntformula$ is a quantifier-free formula in a background theory $T$. 
In some cases, the languages $L_i$ include all well-formed expressions in $T$ of
the same sort as $f_i$, and thus $L_i$ can be dropped from the problem.
A tuple of candidate functions $\vec{f^*}$
satisfies the semantic restrictions
for functions-to-synthesize $\vec{f}$
in conjecture $\exists \vec{f}\, \forall \vec{x} \, \syntformula$ 
in background theory $T$
if $\forall \vec{x} \, \syntformula$ is valid in $T$
when $\vec{f}$ are defined to be terms
whose semantics are given by the functions
$(\vec{f^*} )$~\cite{sygus,sygus-if}.

\subsection{Basic Definitions}
\label{subsec:oracles}

We use the term \emph{oracle} to refer to a (possibly black-box) component that can be queried in a pre-defined way by the solver. An oracle interface defines how an oracle can be queried. 
This concept is borrowed from~\cite{ogis-theory}. We extend the definition of oracle interfaces to also provide the solver with information on the \emph{meaning} of the response, in the form of expressions that generate assumptions or constraints.

\begin{definition}[Oracle Interface]
An oracle interface $\freeorint$ is a tuple $\freeointerface{\vec{y}}{\vec{z}}{\alpha_{gen}}{\beta_{gen}}$ where:
\begin{itemize}
    \item $\vec{y}$ is a list of sorted variables, which we call the \emph{query domain} of the oracle interface;
    \item $\vec{z}$ is a list of sorted variables, which we call its \emph{response co-domain};
    \item $\alpha_{gen}$ is a formula whose free variables are a subset of $\vec{y}, \vec{z}$, which we call its \emph{assumption generator}; and
    \item $\beta_{gen}$ is a formula whose free variables are a subset of $\vec{y}, \vec{z}$, which we call its \emph{constraint generator}.
\end{itemize}
$\Box$
\end{definition}

We assume that all oracle interfaces have an associated oracle that implements their prescribed interface for \emph{values} of the input sort, and generates concrete values as output.
In particular, an oracle for an oracle interface of the above form accepts
a tuple of values with sorts matching $\vec{y}$, and returns a tuple of values with sorts matching $\vec{z}$.
It is important to note that the notion of a value is specific to a sort,
which we intentionally do not specify here.
In practice, we assume e.g. the standard values for the integer sort; we assume all closed lambda terms are values for higher-order sorts, and so on.

An oracle interface defines how assumptions and constraints can be given to a solver via calls to black-box oracles, as given by the following definition.

\begin{figure}
\begin{minipage}{0.45\textwidth}
    \centering
   \[
            I = \begin{cases}
            Q&: (y_1 :\,\sigma_1), \ldots,( y_j :\,\sigma_j)\\
            R&: (z_1 :\,\sigma'_{1}), \ldots,( z_k :\,\sigma'_{k})\\
            \alpha_{gen}&: \textit{assumption generator}\\
            \beta_{gen} &: \textit{constraint generator}\\
            \end{cases}
 \]\end{minipage}
    \caption{Oracle interface}
    \label{fig:oracleinterface}
\end{figure}

\begin{definition}[Assumptions and Constraints Generated by an Oracle Interface]
Assume $\freeorint$ is an oracle interface 
of form $\freeointerface{\vec{y}}{\vec{z}}{\alpha_{gen}}{\beta_{gen}}$.
We say formula $\alpha_{gen}\CDOT\{\vec{y} \rightarrow \vec{c}, \vec{z} \rightarrow \vec{d}\}$ is an assumption generated by $\freeorint$
if calling its associated oracle
for input $\vec{c}$ results in output $\vec{d}$.
In this case, we also say that
$\beta_{gen}\CDOT\{\vec{y} \rightarrow \vec{c}, \vec{z} \rightarrow \vec{d}\}$
is a constraint generated by $\freeorint$.
$\Box$
\end{definition}

We are now ready to define the main problems introduced by this paper.
In the following definition, we distinguish two kinds of function symbols:
{\em oracle function symbols}, which are given special semantics in the following definition;
all others we call {\em ordinary function symbols}.
As we describe in more detail in Section~\ref{sec:firstorder},
oracle function symbols allow us to incorporate function symbols that correspond directly to oracles in specifications and assertions.

\begin{definition}[Satisfiability Modulo Theories and Oracles]
\label{def:smto}
A satisfiability modulo oracles (SMTO) problem is a tuple $(\vec{f}, \vec{\theta}, \quantifiedformula, \vec{\freeorint})$, 
where $\vec{f}$ is a set of ordinary function symbols,
$\vec{\theta}$ is a set of oracle function symbols,
$\quantifiedformula$ is a formula in a background theory $T$ whose free function symbols are $\vec{f} \uplus \vec{\theta}$, and $\vec{\freeorint}$ is a set of oracle interfaces.
We say this input is:
\begin{itemize}
\item 
unsatisfiable if $\exists \vec{f}. \exists \vec{\theta}. A \wedge \quantifiedformula \wedge B$ is
$T$-unsatisfiable,
\item
satisfiable if $\exists \vec{f}. \forall \vec{\theta}. A \Rightarrow (\quantifiedformula \wedge B)$ is $T$-satisfiable,
\end{itemize}
where, in each case, $A$ (resp. $B$) is a conjunction of assumptions (resp. constraints) generated by $\vec{\freeorint}$.
$\Box$
\label{def:smto-sat}
\end{definition}

According to the above semantics, constraints are simply formulas that we conjoin together with the input formula.
Assumptions play a different role. In particular, they restrict the possible interpretations of $\vec{\theta}$ that are relevant. As they appear in the antecedent in our satisfiability criteria, values of $\vec{\theta}$ that do not satisfy our assumptions need not be considered when determining whether an \smot input is satisfiable.
As a consequence of the quantification of $\vec{\theta}$, by convention we will say a model $M$ for an \smot problem contains interpretations for function symbols in $\vec{f}$ only; the values for $\vec{\theta}$ need not be given.

It is important to note the role of the quantification for oracle symbols $\vec{\theta}$ in the above definition.
An \smot problem is unsatisfiable if the conjunction of assumptions, input formula, and constraints are unsatisfiable when treating $\vec{\theta}$ existentially, i.e. as uninterpreted functions.
Conversely, an \smot problem is satisfiable only if there exists a model satisfying $(\quantifiedformula \wedge B)$ for \emph{all} interpretations of $\vec{\theta}$ for which our assumptions $A$ hold.

In the absence of restrictions on oracle interfaces $\vec{\freeorint}$,
an \smot problem can be both satisfiable and unsatisfiable, depending on the constraints and assumptions generated.
For instance, when $A$ becomes equivalent to false, the input is trivially both unsatisfiable and satisfiable.
However, in practice, we define a restricted fragment of \smot, for which this is not the case, and
we present a dedicated procedure for this fragment in Section~\ref{sec:firstorder}.
To define this fragment, we introduce the following definition.

\begin{definition}[Oracle Interface Defines Oracle Function Symbol]
An oracle interface $\orint$ \emph{defines} an oracle function symbol $\theta$
if it is of the form
$\ointerface{(y_1, \ldots y_j)}{(z)}{\theta(y_1, \ldots y_j) \equality z}$,
and its associated oracle $\mathcal{O}$ is functional.
In other words, calling the oracle interface generates an equality assumption of the form $\theta(y_1, \ldots y_j) \equality z$ only. 
$\Box$
\end{definition}

From here on, as a convention, we use $\orint$ to refer to an oracle interface that specifically defines an oracle function symbol, and $\freeorint$ to refer to a free oracle interface, i.e., an oracle interface which may not define an oracle function symbol.

\begin{definition}[Definition Fragment of SMTO]
\label{def:defsmot}
An SMTO problem $(\vec{f}, \vec{\theta}, \quantifiedformula, \vec{\orint})$
is in Definitional SMTO if and only if
$\vec{\theta} = (\theta_1, \ldots, \theta_n)$,
$\vec{\orint} = (\orint_1, \ldots, \orint_n)$,
and $\orint_i$ is an oracle interface that defines $\theta_i$ for $i = 1, \ldots, n$.
$\Box$
\end{definition}
Note that each oracle function symbol is defined by one and only one oracle interface. 

We are also interested in the problem of synthesis in the presence of oracle function symbols,
which we give in the following definition.

\begin{definition}[Synthesis Modulo Oracles]
A synthesis modulo oracles (\smo) problem is a tuple
$(\vec{f}, \vec{\theta}, \forall \vec{x}. \,\,\syntformula, \vec{\freeorint})$,
where $\vec{f}$ is a tuple of functions (which we refer to as the functions to synthesize),
$\vec{\theta}$ is a tuple of oracle function symbols,
$\forall \vec{x}. \,\,\syntformula$ is a formula is some background theory $T$ where $\syntformula$ is quantifier-free,
and $\vec{\freeorint}$ is a set of oracle interfaces.
A tuple of functions $\vec{f^*}$ is a solution for synthesis conjecture if
$(\vec{x}, \vec{\theta}, \neg \syntformula \CDOT \{\vec{f} \rightarrow \vec{f^*}\}, \vec{\freeorint})$ is unsatisfiable modulo theories and oracles.
$\Box$
\label{def:symo}
\end{definition}
Although not mentioned in the above definition, the synthesis modulo oracles problem may be combined with paradigms for synthesis that give additional constraints for $\vec{f}$ that are not captured by the specification, such as syntactic constraints in syntax-guided synthesis.
In Section~\ref{sec:synthesis},
we present an algorithm for a restricted form of \smo problems where the verification of candidate solutions $\vec{f^*}$ reduces to Definitional \smot.


\section{Satisfiability Modulo Theories and Oracles}
\label{sec:firstorder}
\label{sec:smto}

In this section, we describe our approach to solving inputs in the definition fragment of \smot, according to Definition~\ref{def:defsmot}.
First, we note a subtlety with respect to satisfiability of \smot problems in the definition fragment vs. the general problem. Namely that a problem cannot be both satisfiable and unsatisfiable, and once a result is obtained for Definitional \smot, the result will not change regardless of subsequent calls to the oracles. This is not true for the general \smot problem. In particular, note the following scenarios:
\paragraph{Conflicting Results} Assume that
$\exists \vec{f}. \exists \vec{\theta}. A_i \wedge \quantifiedformula \wedge B_i$ is $T$-unsatisfiable, where $A_i$ (resp. $B_i$ be the conjunction of assumptions (resp. constraints) obtained after $i$ calls to the oracles.
In unrestricted \smto, it is possible that $A_i$ alone is $T$-unsatisfiable, thus $\forall \vec{\theta} A_i \Rightarrow (\quantifiedformula \wedge B_i)$ is $T$-satisfiable and the problem is both satisfiable and unsatisfiable.
However, in Definitional \smto, it is impossible for $A_i$ alone to be unsatisfiable,
since all oracle interfaces defining oracle function symbols, which generate assumptions only of the form $\theta(\vec{y}) \equality z$ and the associated oracles are functional.

\paragraph{Non-fixed Results} Assume that $\exists \vec{f}. \forall \vec{\theta}. A_i \Rightarrow (\quantifiedformula \wedge B_i)$ is $T$-satisfiable,
where $A_i$ (resp. $B_i$ be the conjunction of assumptions (resp. constraints) obtained after $i$ calls to the oracles.
Thus, by Definition~\ref{def:smto}, our input is satisfiable.
In unrestricted \smot, it is possible for an oracle to later generate an additional constraint $\beta$ such that $\forall \vec{\theta} A_i \Rightarrow (\quantifiedformula \wedge B_i \wedge \beta)$ is $T$-unsatisfiable, thus invalidating our previous result of ``satisfiable''. 
However, in Definitional \smot, this cannot occur, since oracles that generate non-trivial constraints are not permitted.
It is trivial that once any \smot is unsatisfiable, it remains unsatisfiable.
Thus the satisfiability results for Definitional \smot, once obtained, are fixed.


\subsection{Algorithm for Definitional \smto}

\begin{figure}
    \centering
    \includegraphics[width=0.3\textwidth]{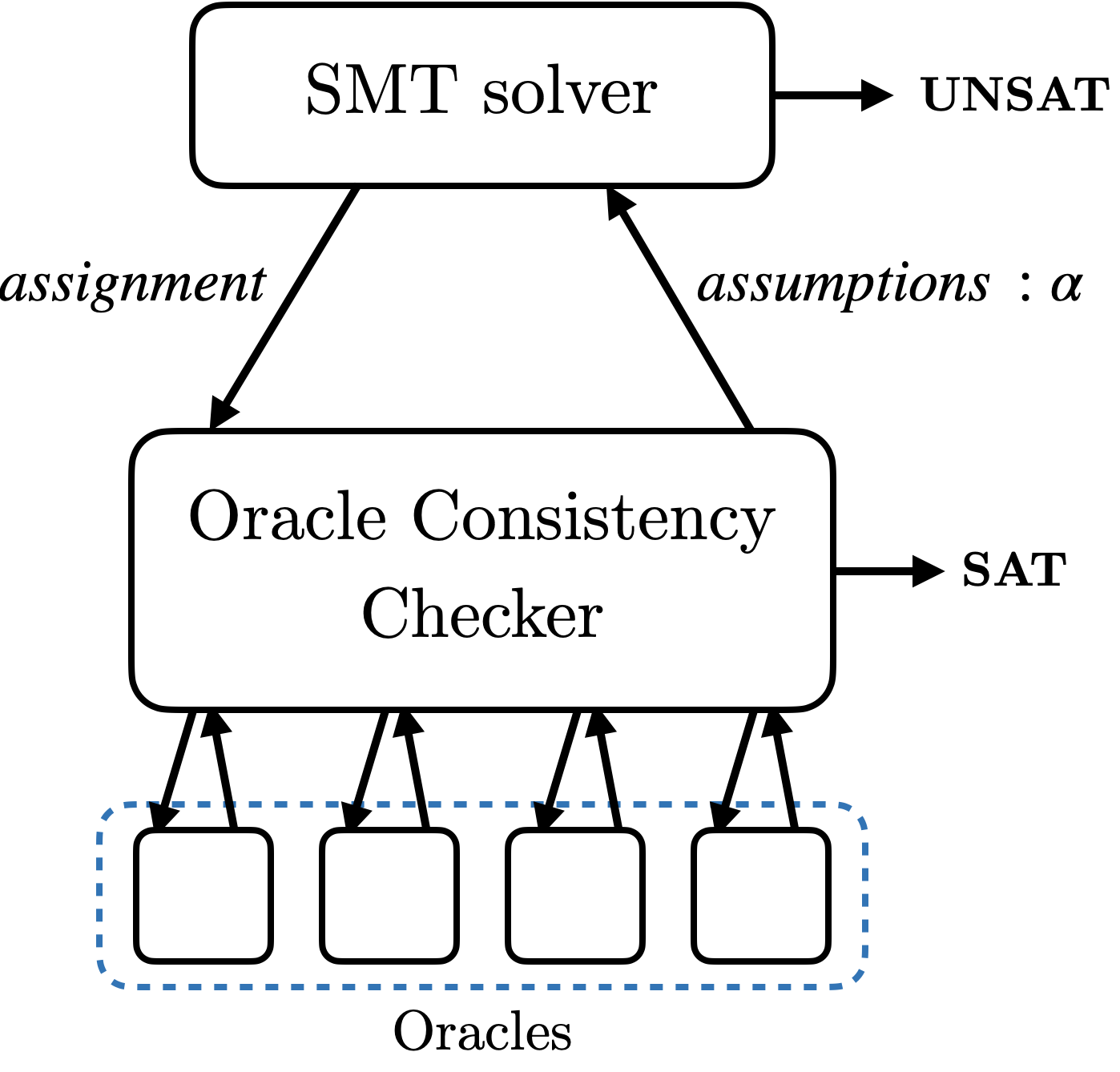}
    \caption{Satisfiability Modulo Oracle Solver}
    \label{fig:DPLLT_oracle}
\end{figure}

Our algorithm for Definitional \smot is illustrated in Figure~\ref{fig:DPLLT_oracle} and given as Algorithm~\ref{alg:sat}.
The algorithm maintains a dynamic set of assumptions $A$ generated by oracles.
In its main loop, we invoke an off-the-shelf SMT solver (which we denote $SMT$) on the conjunction of $\rho$ and our current assumptions $A$. If this returns UNSAT, then we return UNSAT along with the set of assumptions $A$ we have collected.
Otherwise, we obtain the model $M$ generated by the SMT solver from the previous call.

The rest of the algorithm (lines 8 to 20) invokes what we call the \emph{oracle consistency checker}.
Intuitively, this part of the algorithm checks whether our assumptions $A$ about $\vec{\theta}$ are consistent with the external implementation the oracle function symbols are associated with.

We use the following notation: we write $e[t]$ to denote an expression $e$ having a subterm $t$, and $e[s]$ to denote replacing that subterm with $s$.
We write $\nf{t}$ to denote the result of \emph{partially evaluating} term $t$.
For example, $\nf{(\theta(1+1)+1)}=\theta(2)+1$.

In the oracle consistency checker, we first construct the formula $\evalsatformula$ which replaces in $\rho$ all occurrences of ordinary function symbols $f$ with their value in the model $M$, and partially evaluate the result.
Thus, initially, $\evalsatformula$ is a formula whose free symbols are $\vec{\theta}$ only.
The inner loop (lines 9 to 17) incrementally simplifies this formula by calling external oracles to evaluate (concrete) applications of functions from $\vec{\theta}$.
In particular, while $\evalsatformula$ contains at least one application of a function from $\vec{\theta}$, that is, it is of the form $\evalsatformula[\theta_i(\vec{c})]$ where $\vec{c}$ is a vector of values. We know that such a term exists by induction, noting that an innermost application of a function from $\vec{\theta}$ must be applied to values.
We replace this term with the output $d$ obtained from the appropriate oracle.
The call to the oracle for input values $\vec{c}$ may already exist in $A$; otherwise, we call the oracle $\orint_i$ for this input and add this assumption to $A$.
After replacing the application with $d$, we partially evaluate the result and proceed.
In the end, if our formula $\evalsatformula$ is the formula $true$, the consistency check succeeds and we return SAT, along with the current set of assumptions and the model $M$.
We restrict the returned model so that it contains only interpretations for $\vec{f}$ and not $\vec{\theta}$, which we denote $\modelrestrict{M}{\vec{f}}$.
%
%
%
%
This process repeats until a model is found that is consistent with the oracles, or until the problem is shown to be unsatisfiable.

 We will now show that this intuitive approach is consistent with the previously defined semantics for \smto.

\begin{algorithm}[t]
\SetAlgoLined\LinesNumbered
\DontPrintSemicolon
\SetKwInOut{Input}{input}
\SetKwInOut{Output}{output}
\Input{$(\vec{f},\vec{\theta},\satformula,\vec{\orint})$}
\Output{UNSAT/SAT + assumptions $A$ + (model $M$)?}
  \SetKwFunction{SMTO}{SMTO}\SetKwFunction{consistent}{consistent}
  \SetKwProg{myalg}{Algorithm}{}{}
  \myalg{\smto{}}{
  $A \leftarrow true $\;
 \While{true}{
   \eIf{SMT($\satformula\wedge A$)=UNSAT}{
   \KwRet{UNSAT, A}
   }{
   Let $M$ be model for $\satformula\wedge A$ from $SMT$ \;
   Let $\evalsatformula$ be $\nf{(\satformula \CDOT \{ \vec{f} \rightarrow \vec{f}^M \})}$ \;
   \While{$\evalsatformula$ is of the form $\evalsatformula[ \theta_i( \vec c ) ]$}{
      \eIf{$(\theta_i( \vec c ) \teq d) \in A$ for some $d$}
      {
        $\evalsatformula \leftarrow \nf{\evalsatformula[d]}$
      }
      {
        Let $d = call\_oracle(\orint_i, \vec c)$ \;
        $A \leftarrow A \cup (\theta_i( \vec c ) \teq d)$ \;
        $\evalsatformula \leftarrow \nf{\evalsatformula[d]}$
      }
   }
   \If{$\evalsatformula$ is $true$}{
     \KwRet{SAT, A, $\modelrestrict{M}{\vec{f}}$}
   }
 }
}
}
 \caption{Satisfiability Modulo Oracles (\smot) \label{alg:sat}}
\end{algorithm}

\begin{theorem}[Correctness of \smot algorithm]
\label{thm:smto}
Algorithm~\ref{alg:sat} returns UNSAT (resp. SAT) iff the \smot problem 
$(\vec{f},\vec{\theta},\satformula,\vec{\orint})$ is unsatisfiable (resp. satisfiable) according to Definition~\ref{def:smto-sat}.
\end{theorem}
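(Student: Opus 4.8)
The plan is to prove the two directions of the biconditional by relating the termination conditions of Algorithm~\ref{alg:sat} to the existential/universal criteria of Definition~\ref{def:smto-sat}, exploiting the fact that in Definitional \smot the assumption set $A$ is always a conjunction of equalities $\theta_i(\vec c)\teq d$ coming from a functional oracle, hence always $T$-satisfiable, so the pathology of ``both SAT and UNSAT'' cannot arise. First I would establish an invariant maintained across iterations of the outer loop: at every point, $A$ is a consistent set of equalities $\theta_i(\vec c)\teq d$ such that $\mathcal{O}_i(\vec c)=d$, i.e. $A$ is (a finite conjunction of) assumptions generated by $\vec{\orint}$ in the sense of Definition~2.3; moreover because each $\orint_i$ defines $\theta_i$ and its oracle is functional, there is for each oracle a unique partial function extending all the recorded pairs, and any model of $T$ can be taken to interpret $\vec{\theta}$ consistently with $A$. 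I would also record the termination argument: the inner loop (lines 9--17) strictly decreases the number of $\vec{\theta}$-applications in $\evalsatformula$, since partial evaluation of a ground term with all oracle applications resolved to values eventually yields a variable-free term that normalizes to a truth value; and the well-definedness of the ``innermost application is applied to values'' claim is exactly the induction the text alludes to.

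For the UNSAT direction: if the algorithm returns UNSAT at line 5, then $SMT(\satformula\wedge A)$ reported unsatisfiable, i.e. $\satformula\wedge A$ is $T$-unsatisfiable treating $\vec{f},\vec{\theta}$ as free (existentially quantified) symbols. Since $B$ is vacuous in Definitional \smot (constraint generators are $\emptyset$), this is precisely $\exists\vec f.\exists\vec\theta.\,A\wedge\satformula\wedge B$ being $T$-unsatisfiable, so the \smot problem is unsatisfiable by Definition~\ref{def:smto-sat}. Conversely, if the problem is unsatisfiable, I need to argue the algorithm indeed returns UNSAT — this requires the termination argument plus the observation that every model the SMT solver hands back and which survives the consistency check would yield, after extending its interpretation of $\vec\theta$ to agree with $A$, a model of $A\wedge\satformula$, contradicting unsatisfiability; and since the problem is unsatisfiable while $A$ stays satisfiable, the only way the loop can end is via line 5.

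For the SAT direction: if the algorithm returns SAT at line 19, then the current model $M$ satisfies $\satformula\wedge A$ and the consistency check reduced $\evalsatformula=\nf{(\satformula\CDOT\{\vec f\to\vec f^M\})}$ to $true$ using only oracle values recorded in $A$. The key claim is then that $\exists\vec f.\forall\vec\theta.\,A\Rightarrow(\satformula\wedge B)$ is $T$-satisfiable, witnessed by $\modelrestrict{M}{\vec f}$: fix any interpretation of $\vec\theta$ satisfying $A$; because $A$ pins down $\theta_i(\vec c)$ to $d$ for exactly the arguments that occur (after partial evaluation) in $\satformula\CDOT\{\vec f\to\vec f^M\}$, and because substituting those values and partially evaluating gives $true$, the formula $\satformula$ holds under $(\modelrestrict{M}{\vec f},\vec\theta)$; $B$ is trivially true. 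The slightly delicate point is arguing that the set of $\vec\theta$-applications needed to evaluate $\satformula$ under $M$ is finite and exactly those the inner loop visited — this follows from the partial-evaluation/innermost-application induction. Conversely, if the \smot problem is satisfiable, then it is not unsatisfiable (the two are mutually exclusive here, as the text's ``Conflicting Results'' discussion shows, because $A$ is always satisfiable), so by the UNSAT direction the algorithm does not return UNSAT; by the termination argument it must then halt at line 19 returning SAT.

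I expect the main obstacle to be the SAT-soundness step: precisely formalizing that ``$\evalsatformula$ reduces to $true$ using the recorded oracle answers'' entails ``$\satformula$ is valid over all $\vec\theta$ consistent with $A$''. This needs a careful lemma that partial evaluation is semantics-preserving and that the ground applications $\theta_i(\vec c)$ encountered during normalization are the only ones on which the truth value of $\satformula\CDOT\{\vec f\to\vec f^M\}$ depends — essentially a confluence/evaluation-order independence argument for $\nf{\cdot}$ — together with the functionality of the oracles to ensure $A$ does not over-constrain. The termination of the inner loop (monotone decrease of $\vec\theta$-application count under $\nf{\cdot}$) and of the outer loop (a fresh, distinct assumption is added whenever line 19 is not reached, and only finitely many distinct ground applications can appear given the models explored — or, more carefully, appeal to the standard CEGIS-style progress argument) is the other piece requiring care, though it is closer to routine.
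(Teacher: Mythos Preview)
Your soundness arguments (algorithm returns UNSAT $\Rightarrow$ problem unsatisfiable; algorithm returns SAT $\Rightarrow$ problem satisfiable) are essentially the paper's. In particular, for the SAT case the paper formalizes your ``substituting those values and partially evaluating gives $true$'' step exactly as you anticipate: it fixes the sequence $\evalsatformula_0,\ldots,\evalsatformula_n$ produced by the inner loop and proves by backward induction that any extension $M'$ of $\modelrestrict{M}{\vec f}$ satisfying $A_n$ satisfies every $\evalsatformula_i$, hence $\evalsatformula_0=\nf{(\satformula\CDOT\{\vec f\to\vec f^M\})}$. That is the ``careful lemma'' you flag, and it is discharged by that induction rather than by a general confluence statement about $\nf{\cdot}$.

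The genuine gap is in the converse directions. You try to establish ``problem unsatisfiable $\Rightarrow$ algorithm returns UNSAT'' and ``problem satisfiable $\Rightarrow$ algorithm returns SAT'' via a termination argument for the outer loop, and you describe that argument as ``closer to routine.'' It is not: it is false in general. Your claim that ``only finitely many distinct ground applications can appear given the models explored'' fails as soon as an oracle function symbol has an infinite domain; successive SMT models can produce an unbounded stream of fresh argument tuples $\vec c$, so the outer loop need not terminate. The paper is explicit about this: despite the ``iff'' wording in the theorem statement, its proof establishes only the two soundness implications, and termination is deferred to a separate result (Theorem~3.2) that adds the hypotheses of a decidable background theory and \emph{finite} oracle domains. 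Your plan would go through under those extra hypotheses, but as stated it overreaches; you should either restrict to the soundness directions (matching the paper) or make the finite-domain/decidability assumptions explicit before invoking the progress argument.
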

\begin{proof}
UNSAT case:
By definition, an \smto problem is unsatisfiable if $\exists \vec{f}. \exists \vec{\theta}. A \wedge \satformula$ is $T$-unsatisfiable, noting that for the definitional fragment of \smto, $B$ is empty.
Algorithm~\ref{alg:sat} returns UNSAT when the underlying SMT solver returns UNSAT on
the formula $\satformula \wedge A_0$ for some $A_0$.
Since $A_0$ is generated by oracles $\vec{\orint}$, it follows that our input is unsatisfiable.

SAT case:
By definition, an \smto problem is SAT \emph{iff} $\exists \vec{f}. \forall \vec{\theta}. A \Rightarrow \satformula$ is $T$-satisfiable for some $A$.
Algorithm~\ref{alg:sat} returns SAT when $\satformula \wedge A_0$ is SAT with model $M$ for some $A_0$, and when the oracle consistency check subsequently succeeds. Assume that the inner loop (lines 9 to 17) for this check ran $n$ times and that a superset $A_n$ of $A_0$ is returned as the set of assumptions on line 19.
We claim that $\modelrestrict{M}{\vec{f}}$ is a model for $\forall \vec{\theta}. A_n \Rightarrow \satformula$.
Let $M'$ be an arbitrary extension of $\modelrestrict{M}{\vec{f}}$ that satisfies $A_n$. Note that such an extension exists, since by definition of Definitional \smto, $A_n$ is a conjunction of equalities over distinct applications of $\vec{\theta}$.
Let $\evalsatformula_0, \evalsatformula_1, \ldots, \evalsatformula_n$ be the sequence of formulas such that $\evalsatformula_i$ corresponds to the value of $\evalsatformula$ after $i$ iterations of the loop on lines 9 to 17.
We show by induction on $i$, that $M'$ satisfies each $\evalsatformula_i$.
When $i=n$, $\evalsatformula_i$ is $true$ and the statement holds trivially.
For each $0 \leq i<n$, we have that $\evalsatformula_i$ is the result of
replacing an occurrence of $\theta(\vec{c})$ with $d$ in $\evalsatformula_{i-1}$ and partially evaluating the result, where $\theta(\vec{c}) \equality d \in A_n$.
Since $M'$ satisfies $\theta(\vec{c}) \equality d \in A_n$ and by the induction hypothesis satisfies $\evalsatformula_i$, it
satisfies $\evalsatformula_{i-1}$ as well.
Thus, $M'$ satisfies $\evalsatformula_0$, which is
$\nf{(\satformula \CDOT \{ \vec{f} \rightarrow \vec{f}^M \})}$.
Thus, since $M'$ is an arbitrary extension of $\modelrestrict{M}{\vec{f}}$ satisfying $A_n$, we have that $\modelrestrict{M}{\vec{f}}$ satisfies
$\forall \vec{\theta}. A_n \Rightarrow \satformula$ and thus the input is indeed satisfiable.
\end{proof}

\begin{theorem}[Completeness for Decidable $T$ and Finite Oracle Domains]
Let background theory $T$ be decidable, and let the domain of all oracle function symbols be finite. In this case, Algorithm~\ref{alg:sat} terminates.
\end{theorem}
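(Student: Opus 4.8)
The plan is to prove termination of the inner loop (lines 9--17) and of the outer loop (lines 3--20) separately: the inner loop can be handled using only finiteness of the input formula, while the outer loop needs both hypotheses, decidability of $T$ and finiteness of the oracle domains. Throughout I would take for granted that each SMT query on line 4 returns --- this is exactly decidability of $T$, since $\satformula \wedge A$ is quantifier-free ($A$ being a conjunction of ground equalities of the form $\theta_i(\vec c) \equality d$) --- that each oracle call on line 14 returns a value (part of the definition of an oracle), and that partial evaluation $\nf{\cdot}$ terminates and never introduces new applications of symbols from $\vec\theta$.

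For the inner loop I would use the well-founded measure $\#(\evalsatformula)$, the number of occurrences of symbols from $\vec\theta$ in $\evalsatformula$. The initial value $\evalsatformula_0 = \nf{(\satformula \CDOT \{\vec f \rightarrow \vec f^M\})}$ is a fixed finite formula, so $\#(\evalsatformula_0)$ is finite. Each iteration replaces one occurrence $\theta_i(\vec c)$ by a value $d$ and renormalizes; since $d$ contains no oracle symbols and $\nf{\cdot}$ creates none, $\#(\evalsatformula)$ strictly decreases, so the inner loop runs at most $\#(\evalsatformula_0)$ times. When it exits, $\evalsatformula$ contains no applications of $\vec\theta$, and since its only free symbols lay in $\vec\theta$ it is then a closed quantifier-free formula, which $\nf{\cdot}$ reduces to $true$ or $false$.

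For the outer loop the key step --- and the main obstacle --- is to show that every pass that does not return strictly enlarges $A$. Suppose on some pass the solver returns a model $M$ of $\satformula \wedge A$ and the inner loop adds nothing, i.e.\ every oracle application it eliminates is resolved via the first branch (lines 10--11) using an equality $\theta_i(\vec c) \equality d$ already in $A$. Then $M$ satisfies all these equalities, so the substitutions made by the inner loop agree with $M$ on every relevant oracle application; hence, since $\nf{\cdot}$ is truth-preserving, the final $\evalsatformula$ has the same truth value as $\satformula$ under $M$, namely $true$, and the algorithm returns SAT on line 19. Contrapositively, any non-returning pass adds at least one equality $\theta_i(\vec c) \equality d$; this equality is genuinely new, since $M \models A$ but $M \not\models \theta_i(\vec c) \equality d$ (otherwise $\evalsatformula$ would again evaluate to $true$), and it is consistent to add, since by functionality of the oracle each argument tuple has a unique output and so $A$ never contains two equalities with the same left-hand side.

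It then remains to bound $|A|$. Every assumption an oracle can generate has the form $\theta_i(\vec c) \equality \mathcal{O}_i(\vec c)$ and is therefore determined by the tuple $\vec c$ alone; since $\mathrm{dom}(\theta_i)$ is finite there are only finitely many such tuples, so $|A| \le \sum_i |\mathrm{dom}(\theta_i)| < \infty$. As $A$ grows strictly on every non-returning outer iteration, only finitely many outer iterations occur, and each performs finitely much work, so Algorithm~\ref{alg:sat} terminates. The only delicate points I foresee are justifying that $\nf{\cdot}$ does not spawn new oracle applications (needed for the inner-loop measure) and making the ``$M$ agrees with the substitutions, hence $\evalsatformula$ becomes $true$'' inference precise, which is where functionality of oracles and the structure of Definitional \smto are used.
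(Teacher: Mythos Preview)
Your proof is correct and follows the same core idea as the paper's (three-sentence) sketch: each non-returning outer iteration strictly enlarges $A$, and $|A|$ is bounded by $\sum_i |\mathrm{dom}(\theta_i)|$, so the loop terminates. Your treatment is considerably more detailed than the paper's---you supply a separate termination argument for the inner loop via the occurrence count $\#(\evalsatformula)$, and you make explicit the contrapositive step (no new assumption $\Rightarrow$ the substitutions agree with $M$ $\Rightarrow$ $\evalsatformula$ becomes $true$ $\Rightarrow$ SAT is returned), whereas the paper simply asserts that oracle inputs are never repeated and that once exhausted the problem reduces to the decidable theory $T$.

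One small remark: your justification that the added equality is ``genuinely new'' because $M \not\models \theta_i(\vec c)\equality d$ is unnecessary and, when several new equalities are added in the same pass, need not hold for all of them (only for at least one). The control flow of the algorithm already guarantees newness: line~10 checks that no equality with left-hand side $\theta_i(\vec c)$ is in $A$, so anything added on line~15 is a fresh element of $A$. That simpler observation is all you need, and it avoids the semantic detour through $M$.
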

\vspace*{-1mm}
\noindent
{\it Proof sketch:} Termination is guaranteed 
since the algorithm never repeats the same assignment to oracle inputs, and therefore, all input-output pairs for each oracle will be exhausted
eventually. At this point, the oracle function symbols can be replaced by interpreted functions (lookup tables),
and the formula reduces to one in the (decidable) background theory $T$.

Termination is not guaranteed in all background theories since it may be possible to write formulas where the number of input valuations to the oracle function symbols that must be enumerated is infinite. For example, this is possible in the theory of Linear Integer Arithmetic with an oracle function symbol with integer arguments.
%
%

\section{Synthesis Modulo Oracles}
\label{sec:synthesis}
\begin{algorithm}[t]
\SetAlgoLined
\LinesNumbered
\SetKwInOut{Input}{input}
\SetKwInOut{Output}{output}
\Input{$(\vec{f}, \vec{\theta}, \forall \vec{x} \syntformula, \vec{\orint}\uplus \vec{\freeorint})$}
\Output{solution $\vec{f^*}$ or no solution}
  $A \leftarrow \mathit{true}$ \tcp*[r]{conjunction of assumptions} 
  $S \leftarrow \mathit{true}$ \tcp*[r]{synthesis formula} 
 \While{true}{
  $\vec{f^*} \leftarrow$Synthesize( $\exists \vec{f} \,. S$) \;
  \eIf{$\vec{f^*} = \emptyset$}{
  \KwRet{no solution}\;
  }{
  $V \leftarrow A \wedge \neg \syntformula$  \tcp*[r]{verification formula} 
  $(r, \alpha, M)$ $\leftarrow$ SMTO$(\vec{x}, \vec{\theta}, V\CDOT\{\vec{f} \rightarrow \vec{f}^*\}, \vec{\orint}$) \; 
  \eIf{$r$=UNSAT}
  {
  \KwRet{$\vec{f^*}$} 
  }
  {
    $\beta$ $\leftarrow$ call\_additional\_oracles($\vec{\freeorint},\syntformula, M$) \;
    $A \leftarrow A \cup \alpha $ \;
    $S \leftarrow S \cup \syntformula \CDOT \{\vec{x} \rightarrow \vec{x}^M\} \cup \beta$\; 
    }
  }
  }
\caption{Synthesis Modulo Oracles \label{alg:synth}}
\end{algorithm}


A \smo problem consists of: a tuple of functions to synthesize $\vec{f}$; a tuple of oracle function symbols $\vec{\theta}$; a specification in the form $\forall \vec{x}. \,\,\syntformula$, where $\syntformula$ is a quantifier-free formula in some background theory $T$, and a set of oracle interfaces $\vec{\freeorint}\uplus \vec{\orint}$.
We present an algorithm for a fragment of \smo, where the verification condition reduces to a Definitional \smto problem. To that end, we require that $\vec{\orint}$ is a set of oracle interfaces that define $\vec{\theta}$, and $\vec{\freeorint}$ is a set of oracle interfaces that only generate constraints, i.e., $\alpha_{gen}$ is empty.  We will show that these restrictions permit us to use the algorithm for Definitional \smto to check the correctness of a tuple of candidate functions in Theorem~\ref{thm:symo-corr}.

\subsection{Algorithm for Synthesis with Oracles}
\label{sec:alg}
We now proceed to describe an algorithm for solving synthesis problems using oracles, illustrated in Figure~\ref{fig:synthesis}. 
Within each iteration of the main loop, 
the algorithm is broken down into two phases: a {\em synthesis phase} and an {\em oracle phase}.
The former takes as input a synthesis formula $S$ which is incrementally updated over the course of the algorithm and returns a (tuple of) candidate solutions $\vec{f^*}$.
The latter makes a call to an underlying \smto solver for the verification formula $V$, which is a conjunction of the current set of assumptions $A$ we have accumulated via calls to oracles, and the negated conjecture $\neg \syntformula$.
%
%
%
In detail:
\begin{itemize}
    \item \textbf{Synthesis Phase:} 
    The algorithm first determines if there exists a set of candidate functions  $\vec{f}^*$ that satisfy the current synthesis formula $S$. 
    If so, the candidate functions are passed to the oracle phase. 
    \item \textbf{Oracle Phase I:} The oracle phase 
     calls the \smot solver as described in section~\ref{sec:firstorder} on the following Definitional \smot problem: $(\vec{x}, \vec{\theta}, V\CDOT\{\vec{f} \rightarrow \vec{f^*}\},\vec{\orint}$.
     If the \smot solver returns UNSAT, then $\vec{f}^*$ is a solution to the synthesis problem. 
     Otherwise, the \smot solver returns SAT, along with a set of new assumptions $\alpha$ and a model $M$.
     The assumptions $\alpha$ are appended to the set of overall assumptions $A$.
     Furthermore, an additional the constraint $\syntformula \CDOT \{\vec{x} \rightarrow \vec{x}^M\}$ is added to the current synthesis formula $S$.
     This formula can be seen as a counterexample-guided refinement, i.e. future candidate solutions must satisfy the overall specification for the values of $x$ in the model $M$ returned by the \smot solver.
     \item \textbf{Oracle Phase II: } As an additional step in the oracle phase, the solver may call any further oracles $\vec{\freeorint}$ and the constraints $\beta$ are passed to the synthesis formula. Note that the oracles in $\vec{\freeorint}$ generate constraints only and not assumptions.
\end{itemize}

\begin{table*}[]
    \centering
    \begin{tabular}{l | c | c }
    Query Type & Oracle Interface & Example synthesis algorithms\\ \hline
    \multicolumn{3}{c}{Constraint generating oracles} \\ \hline
    Membership & $\freeorint_{mem} \freeointerface{y_1, y_2, y}{z_b}{\emptyset}{z_b \Leftrightarrow f(y_1, y_2)=y}$ & Angluin's $L^*$~\cite{DBLP:lstar} \\
    Input-Output & $\freeorint_{io}\freeointerface{y_1, y_2, y}{z_b}{\emptyset}{z_b \Leftrightarrow f(y_1, y_2)\neq y}$ & Classic PBE \\
    Negative witness & $\freeorint_{neg}\freeointerface{\emptyset}{z_1, z_2,z}{\emptyset}{f(z_1, z_2)\neq z}$ & ICE-learning~\cite{ice} \\
    Positive witness & $\freeorint_{pos}\freeointerface{\emptyset}{z_1,z_2,,z}{\emptyset}{f(z_1, z_2)=z}$ & ICE-learning~\cite{ice}\\
    Implication & $\freeorint_{imp}\freeointerface{f^*}{z_1, z_2, z_1', z_2'}{\emptyset}{f(z_1, z_2)\Rightarrow f(z_1', z_2'}$ & ICE-learning~\cite{ice}\\
    Counterexample  & $\freeorint_{cex}\freeointerface{f^*}{\vec{z}}{\emptyset}{\syntformula\{\vec{x}{\rightarrow \vec{z}\}}}$ & Synthesis with validators~\cite{pldi-representationinvs} \\ 
    Distinguishing-input & $\freeorint_{di}\freeointerface{f^*}{z_1, z_2, z}{\emptyset}{f(z_1,z_2)=z}$ & Synthesis with distinguishing inputs~\cite{ogis}\\

    \hline
    \multicolumn{3}{c}{Constraint and assumption generating oracles} \\ \hline
    Correctness  & $\orint_{corr}\freeointerface{f^*}{z_b}{\theta(f^*)=z_b}{\emptyset}$ & ICE-learning~\cite{ice} \\ 
    Correctness with cex & $\orint_{ccex}\freeointerface{f^*}{z_b,\vec{z}}{\theta(f^*)=z_b}{\syntformula\{\vec{x}{\rightarrow \vec{z}\}}}$ & classic CEGIS~\cite{sketching}\\ 
    \end{tabular}
    \caption{Common oracle interfaces, illustrated for synthesizing a single function which takes a two inputs $f(x_1, x_2)$. $y$ indicates query variables, except where they are the candidate function, in which case we use $f^*$, and $z$ indicates response variables, where $z_b$ is a Boolean. \label{tab:orints}}
\end{table*}

\begin{figure}
    \centering
    \includegraphics[width=0.35\textwidth]{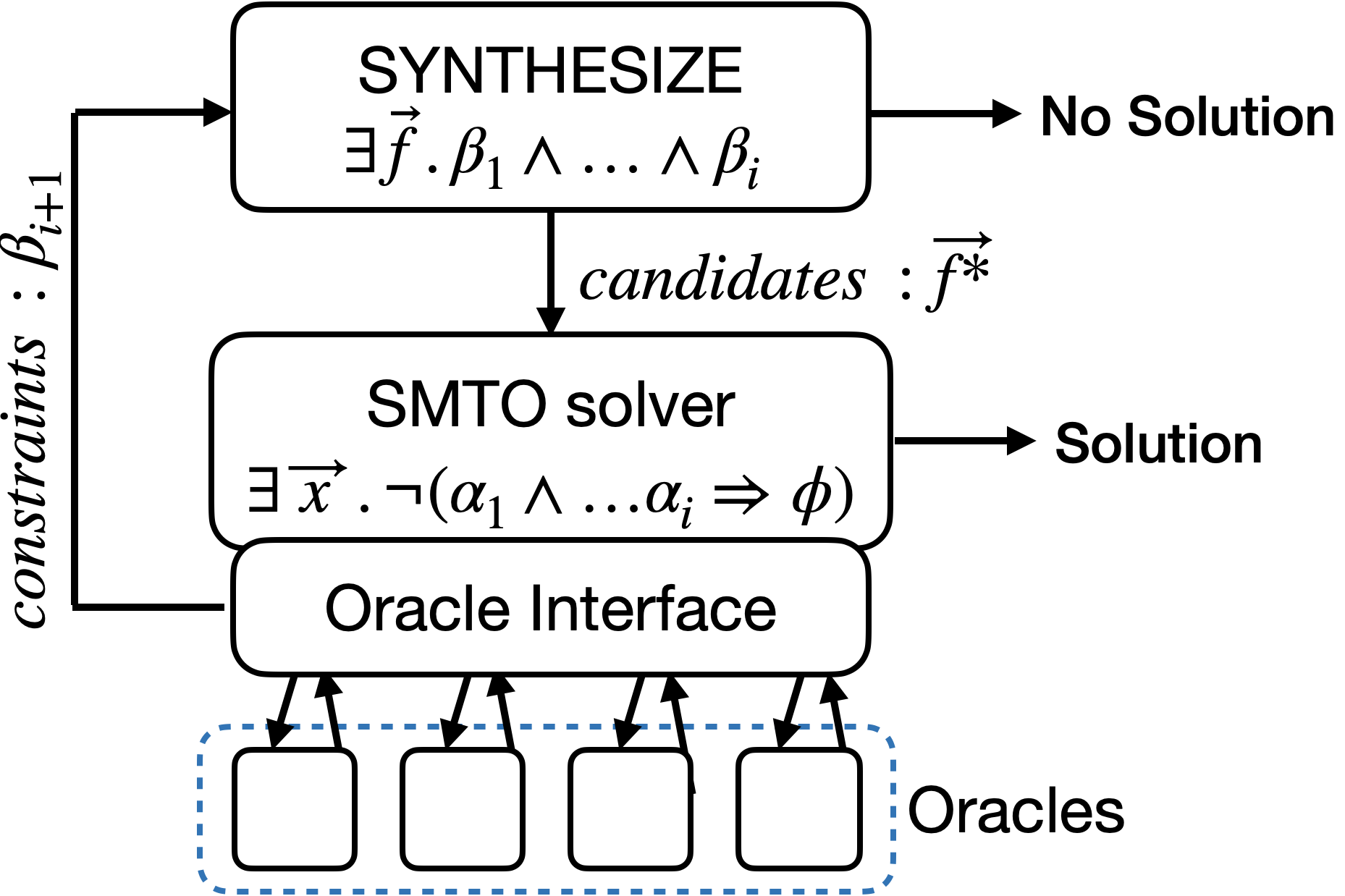}
    \caption{\smo Algorithm Illustration}
    \label{fig:synthesis}
 \end{figure}

\begin{theorem}[Soundness]
\label{thm:symo-corr}
If Algorithm~\ref{alg:synth} returns $\vec{f^*}$, then $\vec{f^*}$ is a valid solution for the \smo problem $(\vec{f}, \vec{\theta}, \forall \vec{x} \syntformula, \vec{\orint}\uplus \vec{\freeorint})$. 
\end{theorem}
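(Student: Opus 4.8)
The plan is to reduce soundness to two things already in hand: correctness of the \smot subroutine (Theorem~\ref{thm:smto}) and the definition of an \smo solution (Definition~\ref{def:symo}). Algorithm~\ref{alg:synth} returns $\vec{f^*}$ only from the branch that fires when, in that iteration, the \smot call on $(\vec{x}, \vec{\theta}, V\CDOT\{\vec{f} \rightarrow \vec{f^*}\}, \vec{\orint})$ reports $r = \mathrm{UNSAT}$, where $V = A \wedge \neg \syntformula$ and $A$ is the assumption set accumulated so far. Since $A$ mentions only symbols from $\vec{\theta}$ and concrete values, we have $V\CDOT\{\vec{f}\rightarrow\vec{f^*}\} = A \wedge (\neg\syntformula\CDOT\{\vec{f}\rightarrow\vec{f^*}\})$, whose free function symbols are $\vec{x}\uplus\vec{\theta}$; by the hypotheses on $\vec{\orint}$ (it defines $\vec{\theta}$) this is a well-formed Definitional \smot instance, so Theorem~\ref{thm:smto} applies to it.

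The one invariant I would establish explicitly, by induction over iterations of the main loop, is that the accumulated set $A$ is always a conjunction of assumptions generated by $\vec{\orint}$. Initially $A = \mathit{true}$, the empty conjunction. At each update $A \leftarrow A \cup \alpha$, inspection of Algorithm~\ref{alg:sat} shows that the returned $\alpha$ is a conjunction of equalities $\theta_i(\vec{c}) \equality d$, each one obtained from an actual call to the defining oracle $\orint_i$ on input $\vec{c}$ that returned $d$, hence each one an assumption generated by $\orint_i$. Because every defining oracle is functional, merging $\alpha$ with the current $A$ never produces two different right-hand sides for the same application $\theta_i(\vec{c})$, so the result is still jointly realizable by a single family of oracle calls, i.e. still a conjunction of assumptions generated by $\vec{\orint}$. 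This functionality-based consistency argument is the genuinely delicate point; the rest is bookkeeping.

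Given the invariant, I would finish as follows. Applying Theorem~\ref{thm:smto} to the $\mathrm{UNSAT}$ answer, the instance $(\vec{x}, \vec{\theta}, A \wedge \neg\syntformula\CDOT\{\vec{f}\rightarrow\vec{f^*}\}, \vec{\orint})$ is unsatisfiable modulo theories and oracles; since $\vec{\orint}$ is a definitional interface set, its constraint part is empty, so by Definition~\ref{def:smto} there is a conjunction $A_0$ of assumptions generated by $\vec{\orint}$ such that $\exists\vec{x}.\,\exists\vec{\theta}.\, A_0 \wedge A \wedge \neg\syntformula\CDOT\{\vec{f}\rightarrow\vec{f^*}\}$ is $T$-unsatisfiable. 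Now set $A' := A \wedge A_0$, which by the invariant is a conjunction of assumptions generated by $\vec{\orint}$ and hence by $\vec{\orint}\uplus\vec{\freeorint}$, and set $B' := \mathit{true}$, the empty conjunction of constraints (the interfaces in $\vec{\freeorint}$ generate only constraints, and none of them is needed). Then $\exists\vec{x}.\,\exists\vec{\theta}.\, A' \wedge \neg\syntformula\CDOT\{\vec{f}\rightarrow\vec{f^*}\} \wedge B'$ is $T$-unsatisfiable, so by Definition~\ref{def:smto} the instance $(\vec{x}, \vec{\theta}, \neg\syntformula\CDOT\{\vec{f}\rightarrow\vec{f^*}\}, \vec{\orint}\uplus\vec{\freeorint})$ is unsatisfiable modulo theories and oracles, which is precisely the condition in Definition~\ref{def:symo} for $\vec{f^*}$ to be a valid solution. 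If the \smo problem additionally carries syntactic restrictions, they hold automatically since the synthesis call only ever returns candidates in the prescribed language.

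The expected main obstacle is exactly the loop invariant above: the assumption set is assembled incrementally across many iterations and across several independent \smot calls, so one must argue that the pieces are mutually consistent and collectively correspond to a genuine sequence of oracle queries — and this is where the functionality requirement on defining oracle interfaces does the essential work. A secondary, easier point to get right is that the constraint-only oracles $\vec{\freeorint}$ and the evolving synthesis formula $S$ are irrelevant to soundness: they only restrict which candidates are proposed and can never make an incorrect candidate pass the \smot verification.
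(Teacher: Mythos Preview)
Your proposal is correct and follows essentially the same route as the paper: reduce soundness to Theorem~\ref{thm:smto} plus Definition~\ref{def:symo}, observing that unsatisfiability of the definitional \smot verification instance implies unsatisfiability of the full instance with $\vec{\orint}\uplus\vec{\freeorint}$. Your argument is in fact more careful than the paper's own proof, which asserts directly that the \smot call establishes unsatisfiability of $(\vec{x}, \vec{\theta}, \neg\syntformula\CDOT\{\vec{f}\rightarrow\vec{f^*}\}, \vec{\orint})$ without explicitly accounting for the accumulated $A$ in $V$; your loop invariant (that $A$ is always a conjunction of assumptions generated by $\vec{\orint}$, guaranteed consistent by functionality of the defining oracles) is exactly what justifies absorbing $A$ into the witnessing assumption conjunction, and the paper leaves this step implicit.
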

\begin{proof}
According to Definition~\ref{def:symo},
a solution $\vec{f^*}$ is valid for our synthesis problem \emph{iff} $(\vec{x}, \vec{\theta}, \neg \syntformula \CDOT \{\vec{f} \rightarrow \vec{f^*}\}, \vec{\orint}\uplus \vec{\freeorint})$ is unsatisfiable modulo theories and oracles, i.e. when $\exists \vec{\theta} A \wedge (\neg \syntformula \CDOT \{\vec{f} \rightarrow \vec{f^*}\} \wedge B)$ is $T$-unsatisfiable for assumptions $A$ and constraints $B$ generated by oracle interfaces $\vec{\orint}\uplus \vec{\freeorint}$.
By definition, Algorithm~\ref{alg:synth} returns a solution if the underlying \smot solver finds that $(\vec{x}, \vec{\theta}, \neg \syntformula \CDOT \{\vec{f} \rightarrow \vec{f^*}\},\vec{\orint})$ is unsatisfiable modulo theories and oracles, i.e. $\exists \vec{\theta} A \wedge (\neg \syntformula \CDOT \{\vec{f} \rightarrow \vec{f^*}\})$ is $T$-unsatisfiable, which trivially implies that the above statement holds.
Thus, and since the \smot solver is correct for UNSAT responses due to Theorem~\ref{thm:smto}, any solution returned by Alg.~\ref{alg:synth} is a valid solution.
\end{proof}

\noindent
{\it Inferring inputs for additional oracles:} 
Although not described in detail in Algorithm~\ref{alg:synth}, we remark that an implementation may infer additional calls to oracles based on occurrences of terms in constraints from $\vec{\freeorint}$ and ground terms in $\syntformula$ under the current counterexample from $M$.
For example, if $f(7)$ appears in $\syntformula \CDOT \{\vec{x} \rightarrow \vec{x}^M\}$, and there exists an oracle interface with a single input $z$ and the generator $\beta_{gen}: f(z) \teq y$, we will call that oracle with the value $7$.
In general, inferring such inputs amounts to matching terms from constraint generators to concrete terms from $\syntformula \CDOT \{\vec{x} \rightarrow \vec{x}^M\}$.
Our implementation in Section~\ref{sec:evaluation} follows this principle.

\section{Instances of Synthesis Modulo Oracles}
\label{sec:existing}

 A number of different queries are categorized in work by Jha and Seshia~\cite{ogis-theory}. Briefly, these query types are
 \begin{itemize}
     \item membership queries: the oracle returns true \emph{iff} a given input-output pair is permitted by the specification
     \item input-output queries: the oracle returns the correct output for a given input
     \item positive/negative witness queries: the oracle returns a correct/incorrect input-output pair
     \item implication queries: given a candidate function which the specification demands is inductive, the oracle returns a counterexample-to-induction~\cite{ic3, ice}.
     \item Counterexample queries: given a candidate function, the oracle returns an input on which the function behaves incorrectly if it is able to find one
     \item Correctness queries: the oracle returns true \emph{iff} the candidate is correct
     \item Correctness with counterexample: the oracle returns true \emph{iff} the candidate is correct and a counterexample otherwise
     \item Distinguishing inputs: given a candidate function, the oracle checks if there exists another function that behaves the same on the set of inputs seen so far, but differently on a new input. If one exists, it returns the new input and its correct output. 
 \end{itemize}
We show the oracle interfaces for each of the classic query types in Table~\ref{tab:orints}. Full details are given in Appendix~\ref{sec:appendix-instances}. To give an intuition for how these interfaces are \emph{used} in a synthesizer: oracle function symbols, $\vec{\theta}$ are used to encapsulate correctness criteria (in full or in part) of $\vec{f}$, when correctness depends on external oracles. For instance, we could write a specification that states that $\theta(\vec{f})$ should return true, and assumptions generated by the oracle are used to determine when that specification is satisfied.
Constraints generated by oracles associated with $\vec{\freeorint}$ are used to guide the synthesizer. Synthesis algorithms typically alternate between a synthesis phase, which solves an approximation of the specification, and a verification phase which verifies the solution to the approximation against the full specification. For example, CEGIS~\cite{sketching} uses the counterexamples to construct this approximation in the form of constraints specifying the behavior of the function must be correct on the counterexamples obtained so far. 

Thus the synthesis modulo oracles framework we present is a flexible and general framework for program synthesis. \symo can implement any inductive synthesis algorithm, i.e., any synthesis algorithm where the synthesis phase of the algorithm iteratively increases the constraints over the synthesis function. Common synthesis algorithms are composed of combinations of the queries described in Table~\ref{tab:orints}. For instance, CEGIS~\cite{sketching} is \smo with a single counterexample-with-correctness interface $\orint_{ccex}$; ICE-learning~\cite{ice} uses $\orint_{corr}, \freeorint_{imp}, \freeorint_{pos}, \freeorint_{neg}$. For full details on these equivalences, see Appendix~\ref{sec:appendix-instances}.

\section{\toolname: a Satisfiability and Synthesis Modulo Oracles Solver}
\label{sec:evaluation}
We implement the algorithms described above in a prototype solver \toolname\footnote{\scriptsize link: \url{https:://github.com/polgreen/delphi}}.
The solver can use any SMT-lib compliant SMT solver as the sub-solver in the \smto algorithm, or bitblast to MiniSAT version 2.2~\cite{minisat}, and it can use any SyGuS-IF compliant synthesis solver in the synthesis phase of the \smo algorithm, or a symbolic synthesis encoding based on bitblasting. In the evaluation we report results using CVC4~\cite{cvc4} v1.9 in the synthesis phase and as the sub-solver for the \smto algorithm. 
The input format accepted by the solver is an extension of SMT-lib~\cite{smtlib} and SyGuS-IF~\cite{sygus-if}.

\subsection{Case Studies}
We aim to answer the following research questions: 
RQ1 -- when implementing a logical specification as an oracle executable, what is the overhead added compared to the oracle-free encoding?
RQ2 -- can \smto solve satisfiability problems beyond state-of-the-art SMT solvers? RQ3 -- can \symo solve synthesis problems beyond state-of-the-art SyGuS solvers? 
To that end, we evaluate \toolname on the following case studies.
\paragraph{Reasoning about primes} We convert a set of 12 educational mathematics problems~\cite{maths1} (Math) that reason about prime numbers, square numbers, and triangle numbers into SMT and \smto problems. We demonstrate that using an oracle to determine whether a number is a prime, a square or a triangle number is more efficient than the pure SMT encoding.
\paragraph{Image Processing}
    Given two images, we encode a synthesis problem to synthesize a pixel-by-pixel transformation between the two.
    Figure~\ref{fig:invertcat} shows an example transformation. 
    The \symo problem uses an oracle, shown in Figure~\ref{fig:imageoracle}, which loads two JPEG images of up to $256\times256$ pixels: the original image, and the target image. Given a candidate transformation function, it translates the function into C code, executes the compiled code on the original image and compares the result with the target image, and returns ``true'' if the two are identical. If the transformation is not correct, it selects a range of the incorrect pixels and returns constraints to the synthesizer that give the correct input-output behavior on those pixels. The goal of the synthesis engine is to generalize from few examples to the full image. The oracle-free encoding consists of an equality constraint per pixel. This is a simplification of the problem which assumes the image is given as a raw matrix and omits the JPEG file format decoder.
\paragraph{Digital Controller Synthesis} We encode the task of synthesizing floating-point controllers which guarantee bounded safety and stability of Linear Time Invariant systems~\cite{control-cegis}, using a range of time discretizations as both a \symo and an oracle-free synthesis problem. We use two oracles in the \symo problem: a correctness oracle to check the stability of the system using Eigen~\cite{eigen} (which avoids the need for the solver to reason about complex non-linear arithmetic). We note that these benchmarks do not necessarily have solutions for all time discretizations.
\paragraph{Programming by example} We encode PBE~\cite{sygus-comp} benchmarks as \symo problems using oracles that demonstrate the desired behavior of the function to be synthesized. These examples show that PBE benchmarks have a simple encoding in our framework.

\begin{figure}
    \centering
    \includegraphics[width=0.45\textwidth]{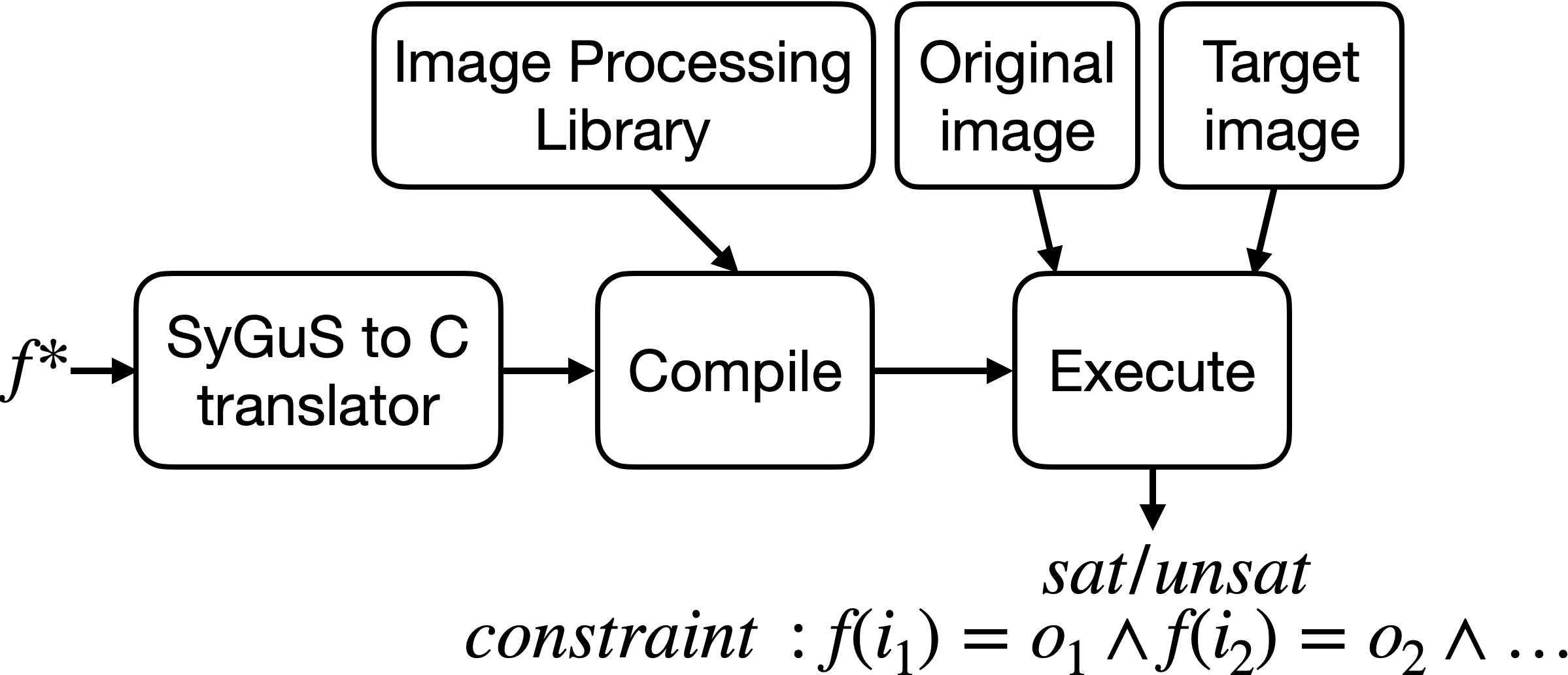}
    \caption{Oracle for image transformations}
    \label{fig:imageoracle}
\end{figure}

\begin{table}
    \begin{tabular}{l  |c | c | c  | c | c | c }
        & \multicolumn{2}{c|}{\toolname}  &\multicolumn{2}{c|}{\toolname} & \multicolumn{2}{|c}{CVC4} \\ 
        (encoding) & \multicolumn{2}{c|}{(oracles)} & \multicolumn{2}{c|}{(no oracles)} &\multicolumn{2}{|c}{(no oracles)}\\ \hline
         Benchmarks (\#)          & \# & t  & \# & t  & \# & t \\\hline
        Images(10) & 9 & 21.6s & 0 & -- & 0 & --  \\
        Math(12) & 9 & $<$0.5s &  1 & $<$0.2s &  5 & 2.2s   \\
        Control-stability(112) & 104 & 29.3s & -- & -- & 16 & 19.4s  \\
        Control-safety(112) & 31 & 59.9s &  0 & -- & 0 &  --  \\
        PBE(150) & 148 & 0.5s & 150 & 1.6s & 150 & $<$0.2s \\
    \end{tabular}
\caption{Comparison of \toolname and CVC4. \# is the number of benchmarks solved within the $600s$ timeout, and $t$ is the average run-time for solved benchmarks. The first column shows results on \symo and \smto problems, the second two columns show results on the equivalent oracle-free encodings. \label{tab:main_results}}
\end{table}

\subsection{Observations}
We report a summary of the results for these case-studies in Table~\ref{tab:main_results} and make the following observations:

\paragraph{RQ1} The overhead incurred by using oracles is small: performance on PBE problems encoded with oracles is similar to PBE problems encoded without oracles, with a small overhead incurred by calling external binaries. 
Given this low overhead, \smo would be amenable to integration with many more sophisticated synthesis search approaches~\cite{dillig-cdcl,cvc4sy,eusolver}.

\paragraph{RQ2} \toolname solves more educational mathematics questions than CVC4, demonstrating that \smto does enable SMT solvers to solve problems beyond the state-of-the-art by delegating challenging reasoning to an external oracle.

\paragraph{RQ3} \toolname solves control synthesis problems and image transformation problems that cannot be easily expressed as SyGuS and elude CVC4, demonstrating that \symo can solve synthesis problems beyond state-of-the-art solvers. When tackling the image transformation problems, \symo dynamically generates small numbers of informative constraints, rather than handling the full image at once.  

%

We also note that in many cases the encodings for \smo and \smto problems are more compact and (we believe) easier to write in comparison to pure SMT/SyGuS encodings. For instance, Figure~\ref{fig:prime} reduces to two assertions and a declaration of a single oracle function symbol.

\textbf{Future work:}
We see a lot of scope for future work on \smo. In particular, we plan to embed \smot solving into software verification tools; allowing the user to replace functions that are tricky to model with oracle function symbols. The key algorithmic developments we plan to explore in future work include developing more sophisticated synthesis strategies that decide when to call oracles based on the learned utility and cost of the oracles. An interesting part of future work will be to explore interfaces to oracles that provide \emph{syntactic} constraints, such as those used in~\cite{cegist,dillig-cdcl}, which will require the use of context-sensitive grammars in the synthesis phase.

\section{Conclusion}

We have presented a unifying framework for synthesis modulo oracles, identifying two key types of oracle query-response patterns: those that return constraints that can guide the synthesis phase and those that assert correctness. We proposed an algorithm for a meta-solver for solving synthesis modulo oracles, and, as a necessary part of this framework, we have formalized the problem of satisfiability modulo oracles. Our case studies demonstrate the flexibility of a reasoning engine that can incorporate oracles based on complex systems, which enables \smto and \smo to tackle problems beyond the abilities of state-of-the-art SMT and Synthesis solvers, and allows users to specify complex problems without building custom reasoning engines.

\hide{
\section*{Acknowledgments}

We thank Susmit Jha for his feedback on an earlier version of this paper.
This work was supported in part by NSF grants CNS-1739816 and CCF-1837132, by the DARPA LOGiCS project under contract FA8750-20-C-0156, by the iCyPhy center, and by gifts from Intel, Amazon, and Microsoft. 
}
\newpage
\balance 

\bibliographystyle{plain}
\bibliography{references}
 
\newpage
\appendix
\subsection{Instances of Synthesis Modulo Oracles}
\label{sec:appendix-instances}

The \symo framework we present is a flexible and general framework for program synthesis. \symo can implement any inductive synthesis algorithm, i.e., any synthesis algorithm where the synthesis phase of the algorithm iteratively increases the constraints over the synthesis function. 
%
%
Here we describe how, by providing specific oracles the algorithm describes will implement standard synthesis algorithms such as CEGIS~\cite{sketching}, ICE-learning~\cite{ice}, Synthesis with distinguishing inputs~\cite{ogis} and CEGIS(T)~\cite{cegist}. 
\subsubsection{CounterExample Guided Inductive Synthesis: }
Suppose we are solving a synthesis formula with a single variable $x$ and a single synthesis function $f$, where $f:\sigma \rightarrow \sigma'$. CEGIS consists of two phases, a synthesis phase that solves the formula $S=\exists f \forall x \in X_{cex}, \syntformula$, where $X_{cex}$ is a subset of all possible values of $x$, and a verification phase which solves the formula $V = \exists x \neg \syntformula$. There are two ways of implementing CEGIS in our framework. The first is simply to pass the full SMT-formula $\syntformula$ to the algorithm as is, without providing external oracles. 
%
%
%
%
The second method is to replace the specification given to the oracle guided synthesis algorithm with $\exists f \forall \theta\,. \theta(f)$ and use an external correctness oracle with counterexamples, illustrated here for a task of synthesizing a function $f$, and receiving a candidate synthesis function $y:\sigma \rightarrow \sigma'$:
            \[
            I_{corr}  = 
            \begin{cases}
            Q&: (y \,\, (\sigma \rightarrow \sigma'))\\
            R&: (z_1 \,\sigma), (z_2 \,\, bool)\\
            \alpha_{gen}&: \theta(y)=z_2 \\
            \beta_{gen}&: \syntformula(z_1/x) \\
            \end{cases}
            \]
By inspecting the formula solved by the synthesis phase at each iteration, we can see that, after the first iteration, the synthesis formula are equisatisfiable if the sequence of counterexamples obtained is the same for both algorithms. 
\begin{table}[H] 
    \centering
    \begin{tabular}{l| c|c}
      iter. &  CEGIS & \smo with correctness oracle   \\\hline
        1   &  $X_{cex} = \emptyset$ & \\ 
            &  $\exists f. \exists x. \syntformula$ & $\exists f. \mathit{true}$ \\ \hline
        2   &  $X_{cex} = {c_1}$ & $\beta_1 = \syntformula(k_1/x)$  \\
            &. $\exists f. \forall x \in X_{cex} \,.\syntformula(x)$ &  $\exists f. \beta_1$ \\ \hline
        3   &  $X_{cex} = {c_1, c_2}$ & $\beta_2 = \syntformula(k_2/x)$  \\
            &. $\exists f. \forall x \in X_{cex} \,.\syntformula(x)$ &  $\exists f. \beta_1 \wedge \beta_2$ \\    \hline
        $\ldots$ & $\ldots$ &$\ldots$ \\    
    \end{tabular}
    \caption{Comparison of the synthesis formula at each iteration, showing that, if the same sequence of counterexamples is obtained, the synthesis formula are equisatisfiable at each step, i.e., \smo reduces to CEGIS.}
    \label{tab:my_label}
\end{table}

\subsubsection{ICE learning}
ICE learning~\cite{ice} is an algorithm for learning invariants based on using examples, counterexamples and implications. Recall the classic invariant synthesis problem is to find an invariant $inv$ such that:
\begin{align*}
\forall x, x' \in X. init(x)&\implies inv(x) \wedge \\
inv(x) \wedge trans(x,x') &\implies inv(x') \wedge \\
inv(x') &\implies \syntformula
\end{align*}
where $init$ defines some initial conditions, $trans$ defines a transition relation and $\syntformula$ is some property that should hold. ICE is an oracle guided synthesis algorithm, where, given a candidate $inv^*$, if the candidate is incorrect (i.e., violates the constraints listed above) the oracle can provide:
positive examples $E \subseteq X$, which are values for $x$ where $inv(x)$ should be $\mathit{true}$;
negative examples $C \subseteq X$, which are values for $x$ where $inv(x)$ should be $\mathit{false}$; and
 implications $I \subseteq X \times X$, which are values for $x$ and $x'$ such that $inv(x) \Rightarrow inv(x')$.
The learner then finds a candidate $inv$, using a symbolic encoding, such that 
\begin{align*}
 (\forall x \in E. inv(x)) \,\,\wedge \,\, \\
 (\forall x \in C. \neg inv(x))\,\,\wedge \,\, \\ 
 (\forall (x,x') \in I. inv(x) \Rightarrow inv(x')).
\end{align*}

\begin{figure}
    \centering
    \includegraphics[width=0.5\textwidth]{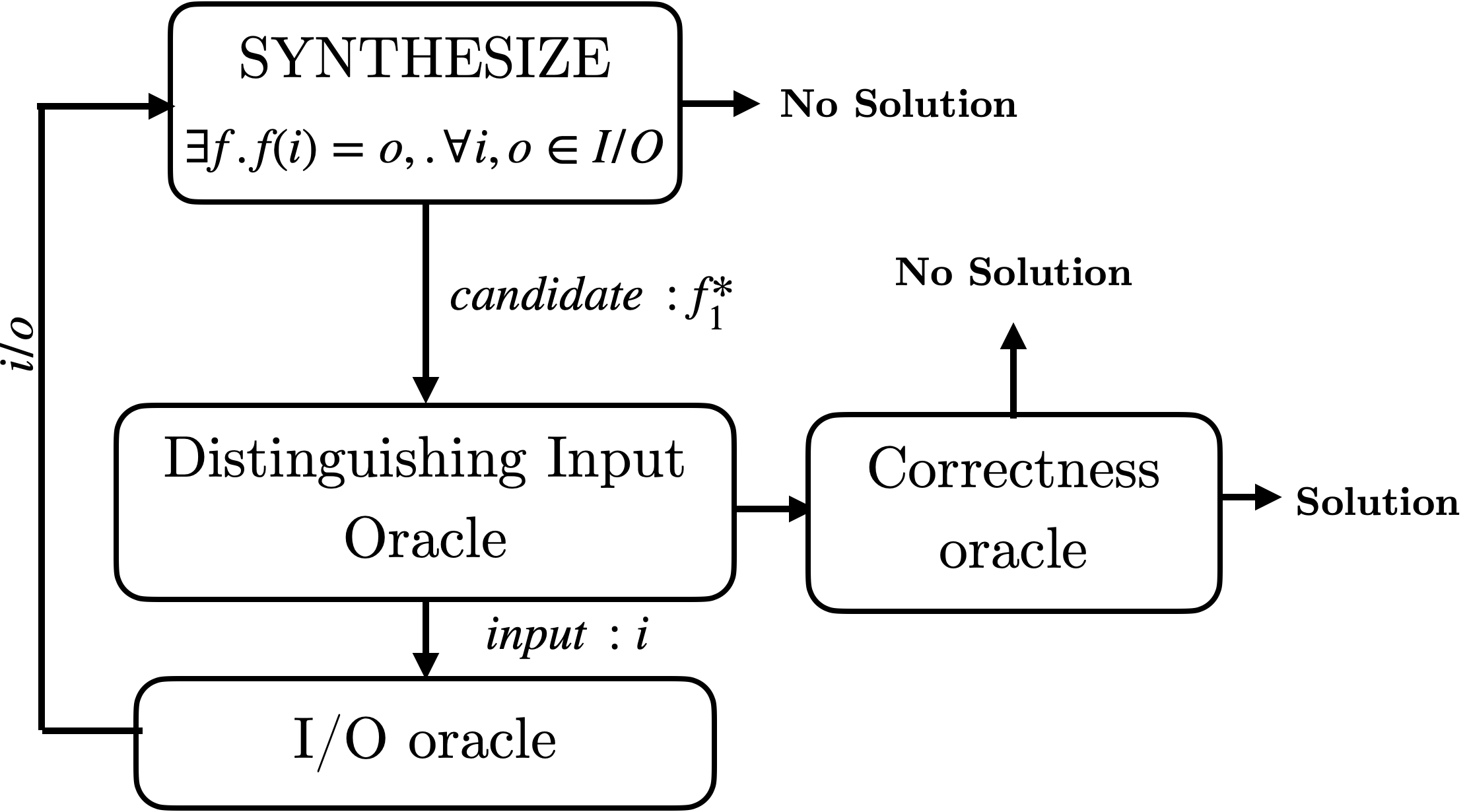}
    \caption{Synthesis with distinguishing inputs}
    \label{fig:ogiscomponent}
\end{figure}
The \symo algorithm described in this work will implement ICE learning when given a correctly defined set of oracles and oracle interface and a constraint $\theta_{corr}(inv)=\mathit{true}$. Interfaces for these oracles, in a system with variables $x_1 \ldots x_n$, are shown in Figure~\ref{fig:iceinterfaces}. 
Note that implication queries generate constraints enforcing that if the pre-state of the implication pair lies in the invariant, so must the post-state, which also allows the learner to exclude the pre-state in its next round of synthesis.
\begin{figure*}
\resizebox{.9\linewidth}{!}{
\begin{minipage}{\textwidth}
\begin{align*}
            I_{corr} &= 
            \begin{cases}
            Q&: (f^* \, (\sigma_1 \times \ldots \times \sigma_n \rightarrow bool))\\
            R&: (z_b \,\, bool)\\
            \alpha_{gen}&: \theta_{corr}(f^*)=z_b \\
            \end{cases}
            I_{neg} &= 
            \begin{cases}
            Q&: (f^* \, (\sigma_1 \times \ldots \times \sigma_n \rightarrow bool))\\
            R&: (z_1 \,\sigma_1), \ldots,(z_n \,\sigma_n)\\
            \beta_{gen}&: \neg f(z_1, \ldots z_n)\\
            \end{cases}
\end{align*}
\begin{align*}
            I_{impl} =& 
            \begin{cases}
            Q&: (f^* \, (\sigma_1 \times \ldots \times \sigma_n \rightarrow bool))\\
            R&: (z_1 \,\sigma_1), \ldots,( z_n \,\sigma_n),\\&\,\,\,\,(z'_1 \,\sigma_1), \ldots,( z'_n \,\sigma_n)\\
            \beta_{gen}&: f(z_1, \ldots z_n) \, \implies  \\& \,\,\,\, f(z'_1, \dots z'_n)\\
            \end{cases}
            I_{pos} =& 
            \begin{cases}
            Q&: (f^* \, (\sigma_1 \times \ldots \times \sigma_n \rightarrow bool))\\
            R&: (z_1 \,\sigma_1), \ldots,( z_n \,\sigma_n)\\
            \beta_{gen}&: f(z_1, \ldots z_n)\\
            \end{cases}
\end{align*}
\end{minipage}
}
\caption{Oracle interfaces for ICE learning, synthesizing an invariant $f$, and receiving a candidate invariant $f^*$. \label{fig:iceinterfaces}}
\end{figure*}

\subsubsection{Synthesis with distinguishing inputs}
This algorithm~\cite{ogis}, illustrated in Figure~\ref{fig:ogiscomponent}, uses several oracles which interact with each other. The synthesis phase searches for a function that satisfies a list of input-output examples. If one is found, it is passed to a \emph{distinguishing-input oracle}, which looks for another, different, function that behaves the same as the existing function on the list of input-output examples, but behaves differently on another  distinguishing input.

\newpage
 If such a function exists, the distinguishing input is passed to an input/output oracle, and the input/output pair is passed to the synthesizer. If a distinguishing input does not exist, the correctness of the function is checked and the algorithm terminates (if this function is not correct, then there is no solution to the synthesis problem).

We can implement this algorithm in \symo by observing that the synthesizer needs only to query the correctness oracle and the distinguishing input oracle, which maintains a list of inputs it has returned so far and receives a response from only one oracle, the input/output oracle. The interface for the distinguishing input-oracle is as follows for a specification synthesizing $f$, and a candidate synthesis function $y:\sigma_1\times \ldots \times \sigma_n \rightarrow \sigma'$:
            \[
            I_\mathit{synthDI}  = 
            \begin{cases}
            Q&: (y \,(\sigma_1 \times \ldots \times \sigma_n \rightarrow \sigma'))\\
            R&: (z_1 \,\sigma_1), \ldots,( z_n \,\sigma_n), (z_{n+1} \, \sigma')\\
            \beta_{gen}&: f(z_1, \ldots z_n) = z_{n+1}\\
            \end{cases}
            \]

\subsubsection{CEGIS(T)}
This algorithm~\cite{cegist} extends a CEGIS loop to incorporate a theory solver that, given a candidate solution, generalizes the candidate to a template solution (removing any constant literals in the solution and replacing them with symbolic values), and searches for solutions within that template. If a solution is found, the oracle returns that valid solution. If no such solution exists, a constraint is passed back that blocks the candidate solution. 
Our framework can be used to implement an approximation of CEGIS(T): one in which the oracle generalizes the candidate and searches for a solution within that template, but if no solution is found, a single point counterexample is returned instead of the constraint over the syntactic space. This highlights a limitation of the framework we propose: the oracles may only place \emph{semantic} constraints over the search space. That is, it may place constraints that reason about the \emph{behavior} of the synthesis functions, and not the syntactic space from which the synthesis functions are to be constructed. The oracle interface for our modified CEGIS(T) is as follows:
            \[
            I_{cegist}  = 
            \begin{cases}
            Q&: (y \,\, (\sigma_1 \times \ldots \times \sigma_n \rightarrow \sigma'))\\
            R&: (z_1 \,\sigma_1), \ldots,( z_n \,\sigma_n),(z_{n+1} \,\, b),\\&\,\, (z_{n+2} \,\, (\sigma_1 \times \ldots \times \sigma_n \rightarrow \sigma'))\\
            \beta_{gen}&: z_{n+1}\,?\,(f = z_2): \syntformula(z_1/x_1 ,\ldots z_n/x_n) 
            \end{cases}
            \]

\subsection{Evaluation Details}
\label{sec:results_extended}
In this section, we give more details on the control, mathematics, and image processing case studies. We also provide a brief practical comparison of invariant synthesis with CEGIS and with ICE learning.

\subsubsection{Further details on case studies }
\paragraph{Control benchmarks}
The control benchmarks~\cite{control-cegis} synthesize single- and double-point precision floating-point controllers that guarantee stability and bounded safety for Linear Time Invariant systems. We use a state-space representation, which is discretized in time with $6$ different constant sampling intervals $T_s$, generating $6$ benchmarks per system:

\begin{align*}
\label{eq:ode1}
\dot{x}_{t+1} = \mat{A}\vec{x}_t+ \mat{B} \vec{u}_t, 
\end{align*}
where 
$\vec{x} \in \mathbb{R}^n$,  
$\vec{u} \in \mathbb{R}^p$ is the input to the system, calculated as $\mat{K}\vec{x}$ where $\mat{K}$ is the controller to be synthesized, 
$\mat{A} \in \mathbb{R}^{n \times n}$ is the system matrix, 
$\mat{B} \in \mathbb{R}^{n \times p}$ is the input matrix,
and subscript $t$ indicates the discrete time step. 

For stability benchmarks, we aim to find a stabilizing controller, such that absolute values of the (potentially complex) eigenvalues of the closed-loop matrix $\mat{A} - \mat{B}\mat{K}$ are less than one. For bounded safety benchmarks, we aim to find a controller that is both stable as before and guarantees the states remain within a safe region of the state space up to a given number of time steps. 

The \symo encoding uses an oracle to determine the stability of the closed-loop matrix. The encoding without oracles requires the SMT solver to find roots of the characteristic polynomial.

\paragraph{Educational mathematics bencharks} These benchmarks are taken from Edexcel mathematics questions~\cite{maths1}. The questions require the SMT solver to find numbers that are (some combination of) factors, prime-factors, square and triangle numbers. The encodings without oracles used recursive functions to determine whether a number is a prime or a triangle number. We note the oracle used alongside the benchmark number in Table~\ref{tab:maths}. We enable the techniques described by Reynolds et. al.~\cite{cvc4recfun} when running CVC4 on problems using recursive functions.
\begin{table}[]
    \centering
    \begin{tabular}{l |c|c | c }
     & \toolname & \toolname & CVC4 \\ 
    benchmark & (oracles) & (no oracles) & (no oracles) \\\hline 
    1b-square & $<$0.2s & $<$0.2 &  -- \\
    1d-prime & $<$0.2s & -- & $<$0.2s \\
    1f-prime & 3.1s & --& $<$0.2s \\
    1h-triangle & $<$0.2s & --& $<$0.2s \\
    1j-square,prime & $<$0.2s & -- & -- \\
    1l-triangle & $<$0.2s & --& $<$0.2s \\
    1m-triangle & $<$0.2s & --& $<$0.2s \\
    ex7-prime & 2.3s & --& -- \\
    ex8-prime & --& --& -- \\
    ex9-prime & 3.2s& --& -- \\
    ex10-prime & --& --& -- \\
    ex11-prime & $<$0.2s& --& -- \\ \hline
    \end{tabular}
    \caption{Solving times for \toolname and CVC4 on math examples using oracle and recursive function encodings. `` -- '' indicates the timeout of $600$s was exceeded.}
    \label{tab:maths}
\end{table}

\paragraph{Image transformations} We run 9 image transformations on images up to $256$x$256$ pixels. The oracle-free encodings consist of an equality constraint for each pixel, and so are more than $50,000$ constraints long. For the oracle-free encoding we simplify the problem by assuming that the images are given as a raw matrix, and this encoding, unlike the oracle-based encoding, does not include the JPEG file format decoder. The results are shown in Table~\ref{tab:image_res}. 
\begin{table}[h]
    \centering
    \begin{tabular}{l |c |c | c  }
         & \toolname &\toolname& CVC4 \\
     Benchmarks      & {(oracles)}&{(no oracles)}&{(no oracles)} \\
        \hline
        Image invert  & 0.4s & -- & --\\  
        Image zombie & 0.4s & -- & --\\ 
        Image attenuate & 0.9s & --& -- \\
        Image crop1  & 14s  & -- & -- \\ 
        Image crop2  & 11.8s & -- & -- \\ 
        Image crop3  & 12.0s  & -- & --  \\ 
        Image brighter & 35.0s & -- & -- \\
        Image darker  & 120.0s  & -- & -- \\
        Image round  & 0.2s & -- & -- \\
        Image blur  & -- & -- & -- \\
    \end{tabular}
    \caption{Solving time for image transformation examples, comparing the oracle vs oracle-free encoding. `` -- ''  indicates the timeout of $600$s was exceeded. The oracle-free encodings are $>$50,000 lines long.}
    \vspace{-1mm}
    \label{tab:image_res}
\end{table}

\subsubsection{A practical comparison of CEGIS vs ICE}
As an illustration of \smo performing ICE-learning, we implement positive, negative, and implication oracles for a small set of invariant benchmarks taken from the SyGuS competition~\cite{sygus-comp}. We note that, for positive witness oracles to be notably informative beyond a pure specification, the positive witness oracle should unroll the transition relation. This is easier with deterministic transition relations. 
Table~\ref{tab:mini_inv} shows a comparison of CEGIS against ICE-learning with positive witness oracles that unroll the transition relation (in Inv-det). We observe that using such oracles, ICE-learning can solve these examples faster and using fewer generated synthesis constraints. Note that we disable all heuristics for solving invariants for these examples.
 We also note that the failure mode of the respective algorithms differs: CEGIS typically generates increasingly many constraints, enumerating through constant values; whilst ICE generates sets of constraints that become tricky to solve and the synthesis phase absorbs all the solving time. 
 Unsurprisingly, in the case where invariant heuristics are enabled and the positive witness oracles are not able to unroll the transition relation (Inv-all), the specifically tailored oracles do not provide a performance gain as they aren't able to provide information that isn't already contained within the specification used by CEGIS.

\begin{table}
    \centering
    \begin{tabular}{l | c | c  |c }
        Benchmarks & config. & \# solved &   t \\
        \hline
        Inv-det(5) & CEGIS & 5  &  163s  \\
        Inv-det(5) & ICE & 5  & 121s\\
        Inv-all(127) & CEGIS & 102 & 0.8s  \\
        Inv-all(127) & ICE & 86  & 2.1s\\
    \end{tabular}
    \caption{Comparison of ICE learning with CEGIS in \toolname. Inv-det have deterministic transition relations and ICE learning uses a positive witness oracle that can unroll the transition relation. In the Inv-all category, all transition relations are treated as non-deterministic and the positive witness oracle is SMT-based only. $t$ is the average solving time per benchmark (excluding timeouts, where the timeout is $600$s).}
    \label{tab:mini_inv}
\end{table}
\end{document}